\documentclass[twocolumn]{autart}

\usepackage{amsmath,amssymb,amsfonts}
\usepackage{mathtools}
\usepackage{cases}
\usepackage{esdiff}

\usepackage{graphicx}
\usepackage{array}
\usepackage{multirow}
\usepackage{float}      % enables [H]
\usepackage{stfloats}   % improves two-column float handling
\usepackage{placeins}   % \FloatBarrier
\usepackage{xcolor}
\usepackage{textcomp}
\usepackage{caption}
\usepackage{subcaption}

\usepackage{algorithm}
\usepackage{algpseudocode}

\usepackage{cite}
\usepackage{url}

\usepackage[hidelinks,bookmarks=false]{hyperref}

\usepackage{comment}
\usepackage{verbatim}
\usepackage{multicol}
\usepackage{balance}
\usepackage{etoolbox}

\usepackage{tikz}
\usetikzlibrary{arrows.meta,positioning,calc}

\newtheorem{remark}{Remark}
\newtheorem{assumption}{Assumption}
\newtheorem{lemma}{Lemma}

\newtheorem{theorem}{Theorem}

\newenvironment{proof}{\par\noindent\textit{Proof.}\ }{\hfill$\square$\par}

\begin{document}

\begin{frontmatter}

\title{Dynamic Quantum Optimal Communication Topology Design for Consensus Control in Linear Multi-Agent Systems} % Title, preferably not more 
                                                % than 10 words.

\thanks[footnoteinfo]{This paper was not presented at any IFAC 
meeting. This work was supported by the National Science Foundation under
Grant ECCS-1944752 and Grant ECCS-2312086. Corresponding author A.~Kargarian.}

\author[LSU]{Milad Hasanzadeh}\ead{mhasa42@lsu.edu},    % Add the 
\author[LSU]{Amin Kargarian}\ead{kargarian@lsu.edu}     % e-mail address 

\address[LSU]{Department of Electrical Engineering, Louisiana State University, 
Louisiana, USA}  % Please supply full address here.

\begin{keyword}                           % Five to ten keywords,  
Quantum optimization; Multi-agent systems; Consensus control; Communication topology design; Mixed-integer quadratic programming               % chosen from the IFAC 
\end{keyword}                             % keyword list or 

\begin{abstract}            
This paper proposes a quantum framework for the design of communication topologies in consensus-based multi-agent systems. The communication graph is selected online by solving a mixed-integer quadratic program (MIQP) that minimizes a cost combining communication and distance penalties with degree-regularization terms, while enforcing exact connectivity through a flow-based formulation. To cope with the combinatorial complexity of this NP-hard problem, we develop a three-block ADMM scheme that decomposes the MIQP into a convex quadratic program in relaxed edge and flow variables, a pure binary unconstrained subproblem, and a closed-form auxiliary update. The binary subproblem is mapped to a quadratic unconstrained binary optimization (QUBO) Hamiltonian and approximately solved via quantum imaginary time evolution (QITE). The resulting time-varying, optimizer-generated Laplacians are applied to linear first- and second-order consensus dynamics. Numerical simulations on networks demonstrate that the proposed method produces connected topologies that satisfy degree constraints, achieve consensus, and incur costs comparable to those of classical mixed-integer solvers, thereby illustrating how quantum algorithms can be embedded as topology optimizers within closed-loop distributed control architectures.
\end{abstract}

\end{frontmatter}

\section{Introduction} \label{sec:intro}
\subsection{Background and Motivation} \label{subsec:related_work}
Multi-agent systems (MASs) consist of a collection of interacting agents that cooperate to achieve collective objectives through local communication and distributed control \cite{kargarian2016toward}. A central problem in MASs is consensus, the process by which agents agree on quantities by exchanging information with neighbors over a communication network \cite{olfati2007consensus}. In early research on consensus, the communication links between agents were fixed and did not change during the task. This setup is called a {static topology}, in which the communication network structure remains the same over time \cite{ hasanzadeh2024dynamic, hasanzadeh2024distributed}. To make the system more flexible, later studies introduced time-varying or switching topologies, called {dynamic topologies}, in which the connections between agents change over time according to schedules \cite{su2015distributed}.

To improve system performance, researchers have studied how to design an optimized communication network for MASs. This is called optimal topology design, where the connections between agents are chosen or updated based on goals such as fast consensus, low communication cost, or strong network reliability \cite{ rafiee2010optimal,yang2010decentralized,aragues2014distributed}. In previous works, the communication graph is chosen by maximizing algebraic connectivity or by enforcing a lower bound on the Fiedler eigenvalue of the Laplacian, which leads to mixed-integer semidefinite or spectral optimization problems \cite{rafiee2010optimal,somisetty2024optimal,tegling2023scale}. These formulations are mathematically elegant and provide direct control over a spectral performance index, but they also have some limitations for practical use. First, they optimize an eigenvalue proxy rather than physical quantities such as communication cost, link distance, or security risk, which makes the trade-offs less transparent. Second, semidefinite constraints and eigenvalue inequalities are computationally heavy, especially if the topology must be recomputed repeatedly in a closed loop or for many different operating points. 

Motivated by these limitations, a different modeling approach that is closer in spirit to classical minimum-cost network design and degree-constrained spanning tree problems can be adopted \cite{goemans2006minimum}. The resulting problem is a mixed-integer quadratic program (MIQP). While effective, such approaches also quickly become computationally expensive as the number of agents and potential links grows due to the nonconvex nature of the problem \cite{burer2009nonconvex,vielma2015mixed}. Exact solvers scale poorly for these NP-hard problems, and heuristic methods trade optimality for tractability \cite{de2012min,ravi2001approximation}. This highlights the need for faster and more scalable methods to design optimal communication topologies.

Quantum computing has emerged as a promising technology for tackling hard optimization problems, including those arising in optimal communication topology design. In contrast to classical bits, qubits can exist in superpositions and become entangled, enabling quantum algorithms that provably outperform the best-known classical methods for certain tasks, as in Shor's factoring and Grover's search algorithms \cite{nielsen2010quantum,shor1999polynomial,grover1996fast}. This has motivated a broader line of quantum algorithms for combinatorial optimization and Hamiltonian problems \cite{montanaro2016quantum}. At the same time, we are still in the noisy intermediate-scale quantum (NISQ) era: available devices have limited qubits and non-negligible gate errors, and large, fault-tolerant speedups on realistic control problems are not yet practical \cite{Preskill2018quantumcomputingin}. Consequently, most applications of quantum algorithms to classical optimization adopt a proof-of-concept viewpoint and rely on gate-level simulators, for example, in chemistry, portfolio optimization, or network design \cite{peruzzo2014variational,farhi2014quantum,ajagekar2019quantum}. The emphasis is on showing how mixed-integer or combinatorial models can be mapped to Hamiltonians and embedded into larger algorithmic or control architectures, so that the same formulations and circuits can be executed on real devices as larger and more reliable quantum processors become available.

Within this proof-of-concept landscape, several algorithmic paradigms have been proposed for ground-state and optimization tasks. Variational Quantum Eigensolver (VQE) \cite{peruzzo2014variational} and the Quantum Approximate Optimization Algorithm (QAOA) \cite{farhi2014quantum} are prominent examples: both use a parameterized quantum circuit (ansatz) evaluated on the quantum device, while a classical optimizer updates the parameters to minimize a cost function. These hybrid schemes are flexible and well-suited to NISQ hardware, but they introduce an outer classical optimization loop with step sizes, learning rates, and other hyperparameters that must be tuned, which can be burdensome when the same combinatorial problem must be solved repeatedly inside a control loop.

An alternative is the Quantum Imaginary Time Evolution (QITE) algorithm \cite{motta2020determining}. QITE can also be used to find low-energy solutions of an objective encoded as a Hamiltonian, but it is a fully quantum method: the ansatz parameters are updated directly through a quantum routine that approximates imaginary-time evolution, using quantum linear solvers or quantum differential equation solvers. In this way, QITE avoids the outer classical optimization loop that VQE and QAOA rely on. This leads to fewer hyperparameters to tune and a more streamlined training procedure, which is attractive for control and optimization applications.

Recent studies have begun exploring quantum computing and communication concepts in the context of MAS applications. For example, \cite{jenefa2024enhancing} introduces quantum protocols and entanglement to enhance distributed MAS environments, focusing primarily on secure synchronization and improved efficiency. The survey \cite{zhao2023quantum} discusses quantum multi-agent reinforcement learning, highlighting the potential of quantum tools in multi-agent learning scenarios; however, it does not address distributed control mechanisms. More recently, \cite{acha2025application} investigates quantum telecommunication for MASs, emphasizing quantum teleportation and wireless channels to mitigate communication delays. While these contributions lay useful foundations, they do not examine the application of quantum optimization algorithms, such as QITE, to design communication topologies.

\subsection{Contributions}
\label{subsec:contributions}

This paper presents a quantum optimization framework for consensus control in MASs, in which QITE dynamically optimizes the communication topology. The key idea is to decompose the MIQP problem into three blocks using an alternating direction method of multipliers (ADMM)-based decomposition: a convex classical subproblem, a pure binary subproblem suitable for quantum optimization, and a simple auxiliary update. The binary subproblem is then solved using QITE. The main contributions of this work are:
\begin{itemize}
    \item We formulate the communication topology design problem as an MIQP with practical cost terms and we enforce graph connectivity exactly using a standard flow-based formulation.
    \item We use a three-block ADMM splitting that relaxes the binary edge variables, introduces a binary proxy and an auxiliary continuous variable, and leads to (i) a convex QP subproblem for the relaxed topology and flows, (ii) a pure binary unconstrained subproblem in the binary proxy, and (iii) a closed-form update for the auxiliary variable.
    \item We map the binary unconstrained subproblem to a quadratic unconstrained binary optimization (QUBO) Hamiltonian and use QITE as a fully quantum solver to approximate its ground state.
\end{itemize}

\subsection{Preliminaries} \label{subsec:preliminaries}

 {Graph Theory \cite{mesbahi2010graph}:} A MAS is modeled as an undirected graph $\mathcal{G} = (\mathcal{V}, \mathcal{E})$, where $\mathcal{V} = \{1, 2, \dots, n\}$ is the set of nodes (agents), and $\mathcal{E} \subseteq \mathcal{V} \times \mathcal{V}$ is the set of communication links. A path is a sequence of edges and nodes between two distinct nodes with no duplicated edges or nodes. A graph is connected if a path exists between any pair of nodes. The adjacency matrix $A \in \mathbb{R}^{n \times n}$ is defined by $a_{ij} = 1$ if $(i,j) \in \mathcal{E}$, and $0$ otherwise. The degree matrix $D$ is diagonal with $d_{ii} = \sum_j a_{ij}$. The Laplacian matrix is given by $L = D - A$. The algebraic connectivity of the graph is the second smallest eigenvalue $\lambda_2$ of $L$, which reflects the graph's connectivity and affects consensus convergence.

 {Quantum Computing \cite{nielsen2010quantum}:} Quantum computing operates on qubits, which can exist in a superposition of pure states $|0\rangle=[1,0]^T$ and $|1\rangle=[0,1]^T$. A general single-qubit state is written as $|\psi\rangle = \alpha|0\rangle + \beta|1\rangle$, where $\alpha, \beta \in \mathbb{C}$ and $|\alpha|^2 + |\beta|^2 = 1$. Multiple qubits can be entangled, creating non-classical correlations across subsystems. Quantum gates are unitary matrices denoted by $U$, satisfying $U^\dagger U = I$, where $U^\dagger$ is the Hermitian conjugate of $U$. Applying a gate to a quantum state evolves it as $|\psi_1\rangle = U|\psi_0\rangle$. Typical quantum gates include single-qubit rotations (e.g., $R_x$, $R_y$, $R_z$) and two-qubit entangling gates such as CNOT. A quantum circuit is a sequence of such gates acting on one or more qubits. A final measurement collapses each qubit into a classical bit (0 or 1), with probabilities determined by the squared magnitudes of the amplitudes. Fig.~\ref{fig:qubit_measurement} illustrates the concept of a qubit in superposition and the probabilistic nature of quantum measurement. It also shows the Bloch sphere representation of a qubit.
\begin{figure}[!t]
    \centering
    \includegraphics[width=0.85\linewidth]{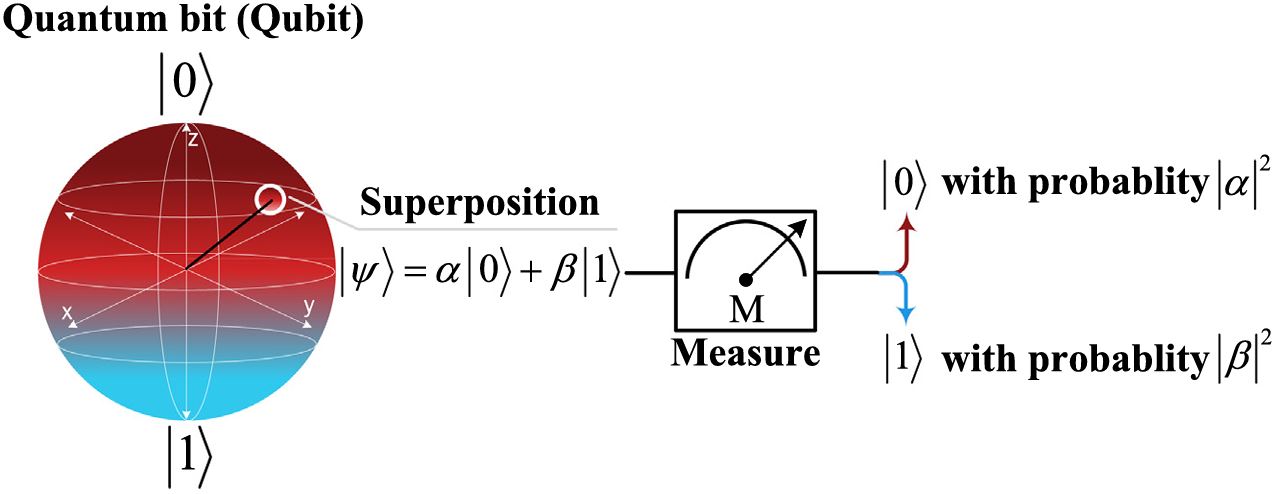}
    \caption{\footnotesize Visualization of a qubit in superposition and measurement outcomes.}
    \label{fig:qubit_measurement}
\end{figure}
Quantum circuits used for optimization can be broadly grouped into two types.  Variational circuits contain parameterized gates whose angles are tunable. Algorithms like QITE use parameterized quantum circuits and update the parameters directly via a quantum routine that approximates imaginary-time evolution. The goal is to prepare a quantum state that approximates the ground state of a Hamiltonian encoding the program.

\section{Problem Statement and Closed-Loop System} 
\label{sec:problem}

Here, we describe the MAS consensus dynamics, the assumptions regarding the communication network, and the communication topology optimization problem. We deliberately focus on linear first- and second-order consensus dynamics as a clear starting point. The optimization layer is combinatorial and does not depend on the linearity of the dynamics, so the same ideas can be applied to nonlinear models.

\subsection{Networked Consensus}
In this section, we recall standard consensus dynamics for first- and second-order multi-agent systems, and then state the assumptions under which the proposed time-varying, optimizer-generated graph guarantees consensus.

\subsubsection{First-Order Consensus Dynamics}
Consider a network of $n$ agents with first-order dynamics
\begin{align}\label{sys1}
 \dot{\boldsymbol{x}}(t) = \boldsymbol{u}(t),
\end{align}
where $\boldsymbol{x}(t) = [x_1(t), \dots, x_n(t)]^\top \in \mathbb{R}^n$ is the stacked vector of agent states and $\boldsymbol{u}(t) = [u_1(t), \dots, u_n(t)]^\top \in \mathbb{R}^n$ is the vector of control inputs. Consensus is achieved if
\[
\lim_{t \to \infty} |x_i(t) - x_j(t)| = 0, \quad \forall\, i,j \in \{1,\dots,n\}.
\]

As proposed in \cite{olfati2007consensus}, consensus in first-order MASs can be achieved using the classical distributed control law
\[
u_i(t) = -\sum_{j \in \mathcal{N}_i} a_{ij}\big(x_i(t) - x_j(t)\big),
\]
where $\mathcal{N}_i$ is the set of neighbors of agent $i$ and $a_{ij} \ge 0$ are the edge weights of the communication graph. In vector form, this becomes
\[
\boldsymbol{u}(t) = -L \boldsymbol{x}(t),
\]
where $L \in \mathbb{R}^{n \times n}$ is the graph Laplacian matrix. Substituting this controller into \eqref{sys1} yields the closed-loop system
\begin{align}\label{csys1_static}
    \dot{\boldsymbol{x}}(t) = -L \boldsymbol{x}(t).
\end{align}

If the graph is connected, then $L$ is positive semidefinite with a single zero eigenvalue associated with the eigenvector $\boldsymbol{1}$, and all other eigenvalues are strictly positive.

In the presence of a time-varying communication topology, the Laplacian becomes time-dependent, and the closed-loop dynamics are
\begin{align}\label{eq:csys1_tv}
  \dot{\boldsymbol{x}}(t) = -L(t)\, \boldsymbol{x}(t),
\end{align}
where $L(t)$ is the Laplacian of the communication graph at time $t$. In our framework, $L(t)$ is determined by the optimizer at each step and held constant until the next update.

\subsubsection{Second-Order Consensus Dynamics}
We now consider a network of $n$ agents with second-order dynamics. Each agent has a position $x_i(t)$ and a velocity $v_i(t)$, and the overall system is described by
\begin{align}
\dot{\boldsymbol{x}}(t) &= \boldsymbol{v}(t), \label{sys2_x}\\
\dot{\boldsymbol{v}}(t) &= \boldsymbol{u}(t), \label{sys2_v}
\end{align}
where $\boldsymbol{x}(t), \boldsymbol{v}(t), \boldsymbol{u}(t) \in \mathbb{R}^n$ are the stacked vectors of positions, velocities, and control inputs, respectively. Consensus in this case means that all agents reach the same position and move with the same velocity, i.e.,
\[
\lim_{t \to \infty} |x_i(t) - x_j(t)| = 0,\;
\lim_{t \to \infty} |v_i(t) - v_j(t)| = 0,
\quad \forall\, i,j.
\]

As shown in \cite{yu2009second}, second-order consensus can be achieved using the distributed controller
\[
u_i(t) = -\alpha \sum_{j \in \mathcal{N}_i} a_{ij}\big(x_i(t) - x_j(t)\big)
        -\beta \sum_{j \in \mathcal{N}_i} a_{ij}\big(v_i(t) - v_j(t)\big),
\]
where $\alpha,\beta > 0$ are control parameters and $a_{ij}$, $\mathcal{N}_i$ are as defined above. In vector form,
\[
\boldsymbol{u}(t) = -\alpha L \boldsymbol{x}(t) - \beta L \boldsymbol{v}(t).
\]
Substituting this into \eqref{sys2_x}–\eqref{sys2_v} yields the closed-loop second-order system
\begin{align}
\dot{\boldsymbol{x}}(t) &= \boldsymbol{v}(t), \label{csys2_static_x}\\
\dot{\boldsymbol{v}}(t) &= -\alpha L \boldsymbol{x}(t) - \beta L \boldsymbol{v}(t). \label{csys2_static_v}
\end{align}

With a time-varying topology, we obtain
\begin{align}
\dot{\boldsymbol{x}}(t) &= \boldsymbol{v}(t), \label{eq:csys2_tv_x}\\
\dot{\boldsymbol{v}}(t) &= -\alpha L(t)\, \boldsymbol{x}(t) - \beta L(t)\, \boldsymbol{v}(t), \label{eq:csys2_tv}
\end{align}
where $L(t)$ is again the Laplacian of the communication graph at time $t$. In the proposed framework, $L(t)$ is updated by the optimizer at discrete times and held constant in between.

\subsubsection{Assumptions and Consensus Guarantee}
The topology optimization algorithm described later produces, at each update time, a Laplacian $L(t)$ that is applied to the closed-loop consensus dynamics. In this subsection, we state the modeling assumptions on the communication layer and the resulting time-varying topology, and we recall the corresponding consensus guarantees.

\begin{assumption}[Ideal communication]
\label{ass:ideal_comm}
Communication links are lossless and delay-free, so the Laplacian $L(t)$ designed by the optimizer is exactly realized by the physical multi-agent system. Packet drops, noise, and time delays are not modeled in this work.
\end{assumption}

\begin{assumption}[Connectivity]
\label{ass:connectivity}
For all $t \ge 0$, $L(t)$ is the Laplacian of an undirected graph with nonnegative weights. The topology is piecewise constant, and there exists $\tau_{\min} > 0$ such that each topology is applied for at least $\tau_{\min}$ time units. Moreover, the graph is connected at all times, i.e., its algebraic connectivity satisfies $\lambda_2(L(t)) > 0$ for all $t$.
\end{assumption}

Under these assumptions, consensus is guaranteed for both first- and second-order dynamics.

\begin{theorem}
\label{thm:consensus}
Consider the first-order system \eqref{eq:csys1_tv} or the second-order system \eqref{eq:csys2_tv_x}–\eqref{eq:csys2_tv} under Assumptions~\ref{ass:ideal_comm}–\ref{ass:connectivity}. Then all agents achieve consensus. In particular, for all $i,j \in \{1,\dots,n\}$,
\[
\lim_{t\to\infty} |x_i(t) - x_j(t)| = 0,
\]
and, in the second-order case,
\[
\lim_{t\to\infty} |v_i(t) - v_j(t)| = 0.
\]
\end{theorem}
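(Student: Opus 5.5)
The plan is to work in disagreement coordinates and use the uniform connectivity in Assumption~\ref{ass:connectivity} to get uniform exponential decay. Because every $L(t)$ is symmetric with $L(t)\boldsymbol{1}=0$, the average $\bar x=\frac1n\boldsymbol{1}^\top\boldsymbol{x}$ (and $\bar v$ in the second-order case) satisfies $\dot{\bar x}=0$ in \eqref{eq:csys1_tv} and $\dot{\bar v}=0$ in \eqref{eq:csys2_tv}. Set $\boldsymbol{e}=\boldsymbol{x}-\bar x\boldsymbol{1}$. It evolves in $\boldsymbol{1}^\perp$ under the same switched dynamics.

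The first ingredient is a \emph{uniform} spectral bound on $\boldsymbol{1}^\perp$. The optimizer outputs graphs on $n$ fixed nodes with binary (or finitely many) edge weights, so only finitely many Laplacians can occur. Each of them is connected, so
\[
\lambda_{\min}:=\min\lambda_2(L)>0,\qquad \lambda_{\max}:=\max\lambda_n(L)<\infty .
\]
If the weights were not finitely valued, I would instead add a compactness/boundedness assumption to obtain the same bounds. The dwell time $\tau_{\min}$ rules out Zeno behaviour, so solutions are well defined and absolutely continuous.

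For the first-order case I would use the common Lyapunov function $V=\|\boldsymbol{e}\|^2$. Between switches,
\[
\dot V=-2\boldsymbol{e}^\top L(t)\boldsymbol{e}\le-2\lambda_{\min}V .
\]
$V$ is continuous across switches, so $V(t)\le e^{-2\lambda_{\min}t}V(0)$. Hence $|x_i-x_j|\le 2\|\boldsymbol{e}\|\to0$.

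For the second-order case, with disagreement variables $(\boldsymbol{e}_x,\boldsymbol{e}_v)$, I would first try an $L$-independent quadratic form
\[
V=c\|\boldsymbol{e}_x\|^2+2\gamma\,\boldsymbol{e}_x^\top\boldsymbol{e}_v+\|\boldsymbol{e}_v\|^2 .
\]
Differentiating and bounding each term by $\lambda_{\min}$ and $\lambda_{\max}$ gives $\dot V\le -q(\|\boldsymbol{e}_x\|,\|\boldsymbol{e}_v\|)$, a quadratic form in the two norms. The goal is to find $c,\gamma$ that make $V$ positive definite and this form positive definite.

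This is the main obstacle. The cross term $\boldsymbol{e}_x^\top(cI-\alpha L-\gamma\beta L)\boldsymbol{e}_v$ cannot be cancelled for every $L$ simultaneously. Its size is of order $\alpha(\lambda_{\max}-\lambda_{\min})$, while the available diagonal damping in $\boldsymbol{e}_x$ is only of order $\gamma\alpha\lambda_{\min}$. So a common quadratic Lyapunov function may exist only under a condition on $\alpha,\beta$ and the spread $\lambda_{\max}/\lambda_{\min}$.

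If that fails, the fallback is a multiple-Lyapunov/dwell-time argument in the style of Hespanha--Morse:
\begin{itemize}
\item each frozen mode is Hurwitz on the disagreement subspace, since the eigenvalues $\mu$ solve $\mu^2+\beta\lambda\mu+\alpha\lambda=0$ with $\lambda>0$;
\item the finitely many mode Lyapunov matrices $P_L$ then give constants $a$ and $\kappa$ with $\dot V_L\le-aV_L$ and $V_{L'}\le\kappa V_L$;
\item consensus follows whenever $\tau_{\min}>\ln\kappa/a$.
\end{itemize}
I expect the rigorous statement for arbitrary $\tau_{\min}$ to require either this dwell-time strengthening of Assumption~\ref{ass:connectivity} or a gain condition on $\alpha,\beta$. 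I would make that explicit rather than claim the second-order case unconditionally.
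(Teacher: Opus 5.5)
Your first-order argument follows the paper's route: an invariant average, then disagreement dynamics $\dot{\boldsymbol e}=-L(t)\boldsymbol e$ on $\boldsymbol 1^\perp$. It also supplies a step the paper's sketch leaves out. The paper goes straight from pointwise $\lambda_2(L(t))>0$ plus a dwell time to $\boldsymbol y(t)\to\boldsymbol 0$, and that inference is false under Assumption~\ref{ass:connectivity} as written. Take $L(t)=2^{-k}L_0$ on $[k,k+1)$ with $L_0$ a connected Laplacian. This satisfies the assumption with $\tau_{\min}=1$, yet $\boldsymbol e(t)\to\exp(-2L_0)\,\boldsymbol e(0)\neq\boldsymbol 0$. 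Your uniform bound $\lambda_{\min}>0$ is therefore the missing ingredient, whether you get it from the finite binary Laplacian family or from a compactness hypothesis. With it, $\|\boldsymbol e\|^2$ is a common Lyapunov function and the first-order case closes for arbitrary switching.

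You do not prove the second-order claim, and you are right not to; this is where the paper's proof has a genuine gap. Its decomposition into $n-1$ independent modes driven by $\lambda_i(t)$ presupposes a common eigenbasis for all the $L_k$, which switching graphs do not share. Its step from ``each frozen $2\times 2$ matrix is Hurwitz'' to convergence needs a dwell time above a threshold, exactly as in your Hespanha--Morse bullet, not an arbitrary $\tau_{\min}>0$.

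In fact the claim fails under Assumption~\ref{ass:connectivity}. Take $n=2$ with edge weight $a(t)\in\{a_1,a_2\}$, $0<a_1<a_2$. The disagreement $y=x_1-x_2$ then obeys $\ddot y=-2a(t)(\alpha y+\beta\dot y)$. Use $a_1$ from each zero crossing of $y$ to the next turning point, and $a_2$ from there back to zero. At $\beta=0$ the crossing speed is multiplied by $\sqrt{a_2/a_1}>1$ every half-cycle. This multiplier and the phase durations depend continuously on $\beta$, so for small $\beta>0$ the disagreement grows geometrically while the dwell times stay bounded below. An extra hypothesis is unavoidable, as you concluded.

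On the gain-condition route, the dependence on the spread $\lambda_{\max}-\lambda_{\min}$ comes from your choice of coordinates, not from the problem. Since $\dot{\boldsymbol e}_v=-L(t)\boldsymbol w$ with $\boldsymbol w:=\alpha\boldsymbol e_x+\beta\boldsymbol e_v\in\boldsymbol 1^\perp$, you get
\[
\dot{\boldsymbol e}_x=\tfrac{1}{\beta}\big(\boldsymbol w-\alpha\boldsymbol e_x\big),\qquad
\dot{\boldsymbol w}=-\tfrac{\alpha^2}{\beta}\boldsymbol e_x+\big(\tfrac{\alpha}{\beta}I-\beta L(t)\big)\boldsymbol w .
\]
Take $V=\alpha^2\|\boldsymbol e_x\|^2+\|\boldsymbol w\|^2$, which is positive definite in $(\boldsymbol e_x,\boldsymbol e_v)$ since $\beta>0$. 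The cross terms cancel identically in $L$:
\[
\dot V=-\tfrac{2\alpha^3}{\beta}\|\boldsymbol e_x\|^2-2\,\boldsymbol w^\top\big(\beta L(t)-\tfrac{\alpha}{\beta}I\big)\boldsymbol w .
\]
Hence $\beta^2\lambda_{\min}>\alpha$ gives exponential second-order consensus under arbitrary switching. It needs no dwell-time requirement and does not depend on $\lambda_{\max}$. When the condition fails, your dwell-time argument over the finite family is the appropriate fallback.
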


\begin{proof}
We sketch the main ideas; detailed proofs can be found in \cite{olfati2007consensus,yu2009second}.

\emph{First-order case:} Define the average state $\bar{x}(t) := \tfrac{1}{n}\boldsymbol{1}^\top \boldsymbol{x}(t)$ and the disagreement vector
\[
\boldsymbol{y}(t) := \boldsymbol{x}(t) - \bar{x}(t)\,\boldsymbol{1}
= \Big(I - \tfrac{1}{n}\boldsymbol{1}\boldsymbol{1}^\top\Big)\boldsymbol{x}(t).
\]
Using $L(t)\boldsymbol{1} = \boldsymbol{0}$ for all $t$, it is easy to see from \eqref{eq:csys1_tv} that $\dot{\bar{x}}(t) = 0$, so the average is invariant. The disagreement dynamics are
\[
\dot{\boldsymbol{y}}(t) = -L(t)\,\boldsymbol{y}(t),
\]
with $L(t)$ symmetric and positive semidefinite. By Assumption~\ref{ass:connectivity}, $\lambda_2(L(t)) > 0$ for all $t$, so along any nonzero disagreement direction the system has strictly negative instantaneous contraction rate. Since the topology is piecewise constant with a uniform dwell time, standard arguments for linear time-varying systems (see, e.g., \cite{olfati2007consensus}) imply that $\boldsymbol{y}(t) \to \boldsymbol{0}$ as $t \to \infty$. Therefore, all agents converge to the same value, equal to the initial average $\bar{x}(0)$, and consensus is achieved.

\emph{Second-order case:} Define the stacked state $\boldsymbol{\xi}(t) = [\boldsymbol{x}(t)^\top,\, \boldsymbol{v}(t)^\top]^\top$. In the disagreement coordinates obtained by projecting onto the subspace orthogonal to $\boldsymbol{1}$, the dynamics \eqref{eq:csys2_tv_x}–\eqref{eq:csys2_tv} decompose into $n-1$ independent second-order modes of the form
\[
\dot{\xi}_{1,i}(t) = \xi_{2,i}(t), \;
\dot{\xi}_{2,i}(t) = -\alpha \lambda_i(t)\, \xi_{1,i}(t) - \beta \lambda_i(t)\, \xi_{2,i}(t),
\]
where $\lambda_i(t)$, $i=2,\dots,n$, are the positive eigenvalues of $L(t)$ \cite{yu2009second}. For each mode, Assumption~\ref{ass:connectivity} implies $\lambda_i(t) > 0$, so the corresponding $2\times 2$ time-varying system matrix has eigenvalues with strictly negative real parts. Using Lyapunov arguments as in \cite{yu2009second}, together with the piecewise constant structure and dwell-time condition, one shows that all disagreement modes converge to zero. Thus, positions and velocities converge to a common value, resulting in second-order consensus.
\end{proof}

\begin{remark}
Stability and convergence depend only on the connectivity and nonnegativity conditions in Assumption~\ref{ass:connectivity}, not on global optimality of the topology with respect to any cost. The optimization layer is used to improve performance (e.g., communication cost), while feasibility and stability are enforced by the connectivity and degree constraints.
\end{remark}

\subsection{Communication Topology Optimization}
\label{subsec:topology_opt}

We now formalize the communication topology design problem that generates the Laplacian $L(t)$ used in the consensus dynamics. The goal is to choose, at discrete update times, a connected communication graph that balances communication cost, geometric distance, and fairness in the distribution of degrees across agents.

Let $0 = t_0 < t_1 < t_2 < \dots$ denote the discrete update times at which the communication topology is recomputed. At each $t_k$ we select an undirected graph \(\mathcal{G}_k\). The corresponding Laplacian matrix is denoted by $L_k$, and we set
\[
L(t) = L_k, \quad t \in [t_k, t_{k+1}).
\]
This piecewise-constant structure is consistent with Assumption~\ref{ass:connectivity} and ensures that the consensus dynamics see a constant topology between updates.

We start from the complete undirected candidate edge set
\[
\mathcal{E} = \big\{(i,j) \,\big|\, 1 \le i < j \le n\big\}.
\]
For each potential edge $(i,j) \in \mathcal{E}$ and update time $t_k$, we introduce a binary decision variable
\[
z_{ij}^k \in \{0,1\},
\]
where $z_{ij}^k = 1$ indicates that the link between agents $i$ and $j$ is active in $\mathcal{E}_k$, and $z_{ij}^k = 0$ means that the link is absent. When the time index is clear, we write $z_{ij}$ for simplicity.

At time $t_k$, each potential edge $(i,j)$ is assigned two basic cost components:
\begin{itemize}
    \item \emph{Communication cost} $c^{\text{comm}}_{ij} \ge 0$, which models hardware, bandwidth, or energy expenditure associated with maintaining the link $(i,j)$;
    \item \emph{Distance cost} $d_{ij} \ge 0$, typically based on the spatial distance between agents, e.g.,
    \[
        d_{ij} = \big\|x_i(t_k) - x_j(t_k)\big\|_2,
    \]
    so that long-range links are penalized more heavily than short-range ones.
\end{itemize}
We combine these into a single linear weight
\begin{equation}
    w_{ij} :=  c^{\text{comm}}_{ij}
            +  d_{ij},
    \quad \forall (i,j) \in \mathcal{E}.
    \label{eq:edge_weight}
\end{equation}

The total linear cost of a topology is then
\begin{equation}
    J_{\text{lin}}(z) = \sum_{(i,j)\in\mathcal{E}} w_{ij} z_{ij}.
    \label{eq:topology_linear_cost}
\end{equation}

To avoid centralized topologies in which a few nodes act as large hubs, we include a quadratic penalty on the degrees of the nodes. Under a topology $z$, the degree of node $i$ is
\begin{equation}
    \deg(i) = \sum_{j \neq i} z_{\min\{i,j\},\max\{i,j\}},
    \label{eq:degree_def}
\end{equation}
and we add the quadratic term
\begin{equation}
    J_{\text{deg}}(z)
    = \sum_{i=1}^n \kappa_i \big(\deg(i)\big)^2,
    \label{eq:degree_quadratic_cost}
\end{equation}
where $\kappa_i \ge 0$ determines how strongly we penalize high degree at node $i$. This term encourages the optimizer to spread edges more evenly across nodes.

In addition to the quadratic penalty, we impose a hard upper bound on the degree of every node:
\begin{equation}
    \deg(i) \le \gamma, 
    \quad \forall i \in \mathcal{V},
    \label{eq:degree_constraint}
\end{equation}
where $\gamma$ is a prescribed maximum degree. Constraint~\eqref{eq:degree_constraint} limits the worst-case communication load on any single agent, while the quadratic term \eqref{eq:degree_quadratic_cost} softly discourages uneven degree distributions within that feasible set.

The consensus guarantees in Theorem~\ref{thm:consensus} require that the communication graph be connected at all times. To enforce connectivity exactly, we use a standard flow-based formulation that introduces artificial flows over the graph.

Fix an arbitrary root node $r \in \mathcal{V}$ and define the directed arc set
\[
\mathcal{A} = \big\{(i,j),(j,i) \,\big|\, (i,j)\in\mathcal{E}\big\}.
\]
For each directed arc $(i,j) \in \mathcal{A}$, we introduce a nonnegative continuous flow variable $f_{ij} \ge 0$. Intuitively, one unit of artificial flow must be delivered from the root to each other node, and flow is only allowed to use active edges.

We impose capacity constraints
\begin{equation}
    0 \le f_{ij} \le (n-1)\, z_{\min\{i,j\},\max\{i,j\}},
    \quad \forall (i,j) \in \mathcal{A},
    \label{eq:flow_capacity}
\end{equation}
and the flow conservation constraints
\begin{align}
    \sum_{j:(j,k)\in\mathcal{A}} f_{jk}
    - \sum_{j:(k,j)\in\mathcal{A}} f_{kj}
    &= 1, && \forall k \in \mathcal{V}\setminus\{r\},
    \label{eq:flow_nonroot}\\
    \sum_{j:(r,j)\in\mathcal{A}} f_{rj}
    - \sum_{j:(j,r)\in\mathcal{A}} f_{jr}
    &= n-1, && \text{for the root } r.
    \label{eq:flow_root}
\end{align}
If a solution satisfies \eqref{eq:flow_capacity}–\eqref{eq:flow_root}, then there exists a spanning tree rooted at $r$, and the undirected graph induced by $\{(i,j)\in\mathcal{E} : z_{ij}=1\}$ is connected.

Collecting the objective and constraints, the communication topology design problem at time $t_k$ is written as the MIQP
\begin{subequations}
\label{eq:MIQP_topology}
\begin{align}
    \min_{z,f} \;
    & J(z) := J_{\text{lin}}(z) + J_{\text{deg}}(z)
      \label{eq:MIQP_obj}\\[1mm]
    \text{s.t.} \;
    & z_{ij} \in \{0,1\},
      \quad \forall (i,j) \in \mathcal{E},\\
    & f_{ij} \ge 0,
      \quad \forall (i,j) \in \mathcal{A},\\
    & \deg(i) \le \gamma, 
      \quad \forall i \in \mathcal{V},
      \label{eq:MIQP_deg}\\
    & 0 \le f_{ij} \le (n-1)\,
      z_{\min\{i,j\},\max\{i,j\}},
      \nonumber\\
    & \hphantom{0 \le f_{ij} \le (n-1)\,}
      \forall (i,j) \in \mathcal{A},
      \label{eq:MIQP_cap}\\
    & \sum_{j} f_{jk} - \sum_{j} f_{kj} = 1,
      \nonumber\\
    & \hphantom{\sum_{j} f_{jk} - \sum_{j} f_{kj} = 1,}
      \forall k \in \mathcal{V}\setminus\{r\},
      \label{eq:MIQP_flow_nonroot}\\
    & \sum_{j} f_{rj} - \sum_{j} f_{jr} = n-1.
      \label{eq:MIQP_flow_root}
\end{align}
\end{subequations}

The binary variables $z_{ij}$ select the active edges, the continuous flow variables $f_{ij}$ certify connectivity via the constraints \eqref{eq:MIQP_cap}–\eqref{eq:MIQP_flow_root}, and the degree-related terms \eqref{eq:MIQP_deg}–\eqref{eq:MIQP_obj} enforce sparsity. Solving \eqref{eq:MIQP_topology} at each update time $t_k$ produces a connected topology and a corresponding $L_k$ that is used in the consensus dynamics.

\begin{assumption}\label{ass:gamma_feasible}
The maximum degree satisfies $1 \le \gamma \le n-1$, and for $n \ge 3$ we require $\gamma \ge 2$ so that a connected graph with bounded degree exists.
\end{assumption}

\begin{lemma}\label{lem:feasibility}
Under Assumption~\ref{ass:gamma_feasible}, the MIQP
\eqref{eq:MIQP_topology} has a nonempty feasible set.
\end{lemma}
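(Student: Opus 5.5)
The plan is to exhibit an explicit feasible point $(z,f)$ of \eqref{eq:MIQP_topology}. We take a Hamiltonian path on the agents, i.e., the edge set $\{(i,i+1): 1\le i\le n-1\}$, and choose the root $r=1$ without loss of generality. Since the root in the formulation is arbitrary, relabeling the vertices handles any other choice of $r$. Concretely, set $z_{i,i+1}=1$ for $i=1,\dots,n-1$ and $z_{ij}=0$ for all other $(i,j)\in\mathcal{E}$. All remaining constraints then reduce to routine checks.

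\emph{Degree constraint \eqref{eq:MIQP_deg}.} In the path, the endpoints $1$ and $n$ have degree $1$ and the interior nodes have degree $2$. For $n=2$ the only node degrees are $1\le\gamma$, by $\gamma\ge 1$. For $n\ge 3$ the maximum degree is $2\le\gamma$, by Assumption~\ref{ass:gamma_feasible}. For $n=1$, $\mathcal{E}$ is empty and the flow constraints are vacuous, so that case is trivial. The upper bound $\gamma\le n-1$ is not needed here, but it is consistent with the construction.

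\emph{Flow constraints.} Route the artificial flow along the path away from the root. Set $f_{i,i+1}=n-i$ for $i=1,\dots,n-1$, and set every other arc flow, including all backward arcs $f_{i+1,i}$, to zero. Then $0\le f_{i,i+1}\le n-1 = (n-1)z_{i,i+1}$. Arcs with $z=0$ carry zero flow, so \eqref{eq:MIQP_cap} holds.

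For the root, the net outflow is $f_{12}=n-1$, which gives \eqref{eq:MIQP_flow_root}. For a non-root node $k$ with $2\le k\le n-1$, the net inflow is $f_{k-1,k}-f_{k,k+1}=(n-k+1)-(n-k)=1$. For $k=n$, the net inflow is $f_{n-1,n}=1$. This verifies \eqref{eq:MIQP_flow_nonroot}, and nonnegativity of the flows is immediate.

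I expect no real obstacle here. The only step needing care is the bookkeeping of the degree bound at the small cases $n=1,2$, where Assumption~\ref{ass:gamma_feasible} permits $\gamma=1$. In those cases one must note that the path has maximum degree $1$ (or has no edges at all), so the requirement $\gamma\ge 2$ is needed only when $n\ge 3$.
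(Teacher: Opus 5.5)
Your proposal is correct and follows the paper's proof essentially step for step. It uses the same path graph $z_{i,i+1}=1$, root $r=1$, forward flows $f_{i,i+1}=n-i$ with zero backward flows, and the same separation of the degree check into $n=1,2$ versus $n\ge 3$; your node-by-node verification of the conservation constraints is simply more explicit than the paper's.
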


\begin{proof}
We construct an explicit feasible solution $(z,f)$.

Step 1 (choice of a connected graph with bounded degree). For $n = 1$ there are no edges and the problem is trivially feasible. For $n = 2$, a single edge between the two nodes gives a connected
graph with degrees $\deg(1)=\deg(2)=1$, which satisfies $1 \le \gamma \le n-1 = 1$ by Assumption~\ref{ass:gamma_feasible}. Thus the degree constraints hold.

Now consider $n \ge 3$. By Assumption~\ref{ass:gamma_feasible},
we have $2 \le \gamma \le n-1$. Define a path graph on the node set
$\mathcal{V} = \{1,\dots,n\}$ by setting
\[
z_{i,i+1} = 1,\quad i = 1,\dots,n-1,
\]
and $z_{ij} = 0$ for all other pairs. The resulting undirected graph
is connected and the node degrees satisfy
\[
\deg(1) = \deg(n) = 1,\quad \deg(i) = 2,\; i=2,\dots,n-1.
\]
Therefore, $\deg(i) \le 2 \le \gamma$ for all $i$, and the degree
constraints \eqref{eq:MIQP_deg} are satisfied. Since we never exceed
degree $2$, the upper bound $\gamma \le n-1$ is automatically
respected.

Step 2 (flows that satisfy the connectivity constraints).
Fix the root node $r = 1$ without loss of generality. On the undirected
path, there is a unique simple path from $r$ to every node $k$.
Define the directed flow variables on the arc set $\mathcal{A}$ as
follows. For each edge $(i,i+1)$ with $i = 1,\dots,n-1$, set
\[
f_{i,i+1} = n-i,\quad f_{i+1,i} = 0,
\]
and $f_{ij} = 0$ for all other directed arcs $(i,j) \in \mathcal{A}$.
The quantity $f_{i,i+1}$ represents how many downstream nodes must
receive one unit of artificial flow from the root.

By construction, $f_{ij} \ge 0$ for all $(i,j)$, and since
$n-i \le n-1$ and $z_{i,i+1}=1$, the capacity constraints
\eqref{eq:MIQP_cap} hold:
\[
0 \le f_{i,i+1} = n-i \le n-1 = (n-1)\, z_{i,i+1}.
\]
For each non-root node $k>1$, the net inflow is
\[
\sum_{j} f_{jk} - \sum_{j} f_{kj} = 1,
\]
and for the root $r=1$ the net outflow equals $n-1$. Thus the flow
conservation constraints \eqref{eq:MIQP_flow_nonroot}–\eqref{eq:MIQP_flow_root}
are satisfied.

We have therefore exhibited $(z,f)$ that satisfies all constraints of
\eqref{eq:MIQP_topology}, so the feasible set is nonempty.
\end{proof}

\section{Three-Block ADMM Decomposition}
\label{sec:three_block_admm}

The MIQP in \eqref{eq:MIQP_topology} gives an exact formulation of the topology design problem at each update time. However, it is difficult to solve this mixed-integer problem to optimality at every step when the network is large. In this subsection, we describe how to decompose the problem into three simpler blocks using the ADMM. The goal is to keep the connectivity and degree constraints in a classical convex block, to move the binary decisions into a pure binary block, and to use a third block to couple them.
\subsection{Three-Block Formulation}

We keep the binary edge variables of the MIQP and introduce two extra copies of them. For each undirected edge $(i,j)\in\mathcal{E}$ we define three variables:
\begin{itemize}
    \item a relaxed edge variable $z_{ij} \in [0,1]$, which appears in the cost and in the constraints of \eqref{eq:MIQP_topology};
    \item a binary proxy variable $r_{ij} \in \{0,1\}$, which will be the input to the binary subproblem;
    \item an auxiliary variable $s_{ij} \in \mathbb{R}$, which helps to enforce agreement between $z_{ij}$ and $r_{ij}$.
\end{itemize}

If we stack all edges into vectors $z,r,s\in\mathbb{R}^m$ with $m = |\mathcal{E}|$, we connect them with the linear constraint
\begin{equation}
    z - r + s = 0,
    \label{eq:three_block_consensus}
\end{equation}
which holds elementwise. In the original MIQP we would require $z \in \{0,1\}^m$. In the three-block scheme, we relax this to $z \in [0,1]^m$, keep $r$ binary, and use the constraint \eqref{eq:three_block_consensus} with penalties and dual variables to keep $z$ and $r$ close to each other.

The constraints of the MIQP, namely the degree constraint \eqref{eq:MIQP_deg}, the capacity constraint \eqref{eq:MIQP_cap}, and the flow conservation constraints \eqref{eq:MIQP_flow_nonroot}–\eqref{eq:MIQP_flow_root}, together with the box constraint $0 \le z \le 1$, define a convex feasible set for $(z,f)$ once integrality is removed.

To apply ADMM, we introduce a dual variable $\lambda \in \mathbb{R}^m$ for the constraint \eqref{eq:three_block_consensus} and add a quadratic penalty. We also add a small regularization on $s$. Let $\rho > 0$ and $\beta \ge 0$ be fixed parameters. We use an indicator function for the convex constraints of the MIQP:
\begin{align}
\iota_{\mathcal{C}}(z,f) =
\begin{cases}
0, & \text{if \eqref{eq:MIQP_deg}--\eqref{eq:MIQP_flow_root} and } 0 \le z \le 1 \text{ hold},\\[1mm]
+\infty, & \text{otherwise.}
\end{cases}
\end{align}

The augmented Lagrangian associated with \eqref{eq:MIQP_topology} and \eqref{eq:three_block_consensus} is
\begin{align}
\mathcal{L}(z,f,r,s,\lambda)
=& J(z) + \iota_{\mathcal{C}}(z,f)  + \lambda^\top (z - r + s)
      \nonumber\\
&+ \frac{\rho}{2} \|z - r + s\|_2^2
      + \frac{\beta}{2} \|s\|_2^2,
\label{eq:aug_lagrangian_three_block}
\end{align}
where $J(z)$ is the MIQP objective in \eqref{eq:MIQP_obj}. The ADMM method minimizes $\mathcal{L}$ with respect to each block $(z,f)$, $r$, and $s$ in turn, and then updates the dual variable $\lambda$.

In Block 1, we update the relaxed topology $z$ and the flows $f$. At iteration $k$, given $r^{(k-1)}$, $s^{(k-1)}$, and $\lambda^{(k-1)}$, we solve
\begin{align}
(z^{(k)}, f^{(k)})
= \arg\min_{z,f} \quad &
\mathcal{L}(z,f, r^{(k-1)}, s^{(k-1)}, \lambda^{(k-1)})
\label{eq:block1_full}
\end{align}
subject to the convex constraints encoded in $\iota_{\mathcal{C}}(z,f)$.

Since the integrality constraint on $z$ is relaxed and all constraints in $\mathcal{C}$ are linear, this is a convex quadratic optimization problem. It can be solved by standard classical solvers. Intuitively, Block 1 chooses a connected topology and flows that minimize the MIQP cost plus a quadratic term that keeps $z$ close to $r^{(k-1)} - s^{(k-1)}$.

In Block 2, we update the binary proxy $r$. At iteration $k$, we fix $z^{(k)}$, $f^{(k)}$, $s^{(k-1)}$, and $\lambda^{(k-1)}$, and solve
\begin{equation}
    r^{(k)} =
    \arg\min_{r \in \{0,1\}^m}
    \mathcal{L}(z^{(k)}, f^{(k)}, r, s^{(k-1)}, \lambda^{(k-1)}).
    \label{eq:block2_r_problem}
\end{equation}

The terms in $\mathcal{L}$ that depend on $r$ are
\[
    -\lambda^{(k-1)\top} r
    + \frac{\rho}{2} \|z^{(k)} - r + s^{(k-1)}\|_2^2.
\]
If we expand the squared norm, we obtain a quadratic function in $r$ with no constraints other than $r \in \{0,1\}^m$. This is a QUBO problem.

In Block 3, we update the auxiliary vector $s$. At iteration $k$, we fix $z^{(k)}$, $r^{(k)}$, and $\lambda^{(k-1)}$ and solve
\begin{equation}
    s^{(k)} =
    \arg\min_{s \in \mathbb{R}^m}
    \mathcal{L}(z^{(k)}, f^{(k)}, r^{(k)}, s, \lambda^{(k-1)}).
    \label{eq:block3_s_problem}
\end{equation}

For each component $e = 1,\dots,m$, the dependence on $s_e$ is quadratic:
\[
\lambda_e^{(k-1)} (z_e^{(k)} - r_e^{(k)} + s_e)
+ \frac{\rho}{2} (z_e^{(k)} - r_e^{(k)} + s_e)^2
+ \frac{\beta}{2} s_e^2.
\]
Taking the derivative with respect to $s_e$ and setting it to zero gives
\[
\lambda_e^{(k-1)}
+ \rho \big(z_e^{(k)} - r_e^{(k)} + s_e^{(k)}\big)
+ \beta s_e^{(k)} = 0.
\]
Therefore
\begin{equation}
    s_e^{(k)}
    = -\frac{\lambda_e^{(k-1)} + \rho \big(z_e^{(k)} - r_e^{(k)}\big)}
            {\rho + \beta},
    \quad e = 1,\dots,m.
    \label{eq:block3_closed_form}
\end{equation}
This update is fully explicit and independent across edges, so Block 3 is computationally cheap.

After updating $(z^{(k)}, f^{(k)}, r^{(k)}, s^{(k)})$, we update the dual variables. The residual of the coupling constraint \eqref{eq:three_block_consensus} at iteration $k$ is
\begin{align}\label{residual}
    d^{(k)} := z^{(k)} - r^{(k)} + s^{(k)}.
\end{align}

The dual variables are updated by
\begin{equation}
    \lambda^{(k)} =
    \lambda^{(k-1)} + \rho\, d^{(k)}.
    \label{eq:dual_update_three_block}
\end{equation}
If the method converges, then $d^{(k)}$ tends to zero, so $z$, $r$, and $s$ agree asymptotically.

Algorithm~\ref{alg:admm_topology} summarizes the three-block ADMM scheme for a single topology update time. In the closed-loop setting, this algorithm is called at each $t_k$, and the resulting edge vector $z^{(K)}$ is used to build the Laplacian $L_k$.

\begin{algorithm}[t!]
  \caption{Three-block ADMM topology solver at update time $t_k$}
  \label{alg:admm_topology}
  \begin{algorithmic}[1]
    \State \textbf{Initialize:} $(z^{(0)}, f^{(0)}, r^{(0)}, s^{(0)}, \lambda^{(0)})$, penalty parameters $\rho>0$, $\beta\ge 0$, and stopping tolerances. Set $k \gets 1$.
    \While{stopping criterion not met}
      \State \textbf{Block 1:}
      Compute $(z^{(k)}, f^{(k)})$ by solving\eqref{eq:block1_full}.
      \State \textbf{Block 2:}
      Update $r^{(k)}$ by solving \eqref{eq:block2_r_problem}.
      \State \textbf{Block 3:}
      Update $s^{(k)}$ componentwise using \eqref{eq:block3_closed_form}.
      \State \textbf{Dual update:}
      Update according to \eqref{eq:dual_update_three_block}.
      \State $k \gets k+1$.
    \EndWhile
    \State \textbf{Return:} final edge vector $z^{(k)}$ and Laplacian $L_k$ constructed from the active edges.
  \end{algorithmic}
\end{algorithm}

\subsection{Convergence and Stability of the Three-Block ADMM}
\label{subsec:conv_admm}

Here, we study the convergence and the algorithmic stability of the three-block ADMM scheme in Algorithm~\ref{alg:admm_topology}. Under a set of regularity assumptions stated below, we show that the consensus residual converges to zero and that any limit point of the ADMM iterates satisfies the first-order optimality conditions of the augmented Lagrangian. We focus on the iteration of the optimization layer, not on the physical stability of the multi-agent consensus dynamics, which was already discussed in Theorem~\ref{thm:consensus}.

We recall that for each undirected edge, we introduced three variables: the relaxed edge variable, the binary proxy, and the auxiliary variable. If we stack all edges in a fixed order, we obtain the vectors $z,r,s \in \mathbb{R}^m$, where $m = |\mathcal{E}|$. The coupling between these three vectors is expressed by the consensus constraint in \eqref{eq:three_block_consensus}. The augmented Lagrangian that we use in the three-block ADMM is given in \eqref{eq:aug_lagrangian_three_block}. At iteration $k$, Algorithm~\ref{alg:admm_topology} first updates the relaxed topology and flow variables $(z^{(k)},f^{(k)})$, then updates the binary proxy $r^{(k)}$, then updates the auxiliary variable $s^{(k)}$, and finally updates the dual variable $\lambda^{(k)}$.

To study the behavior of this iteration, we treat the ADMM scheme as a discrete-time dynamical system. We denote the stacked iterate at step $k$ as
\[
    \Xi^{(k)} := \big(z^{(k)}, f^{(k)}, r^{(k)}, s^{(k)}, \lambda^{(k)}\big).
\]
Then Algorithm~\ref{alg:admm_topology} defines a recursion of the form
\[
    \Xi^{(k+1)} = \mathcal{T}\big(\Xi^{(k)}\big),
\]
where $\mathcal{T}$ is the operator that represents one full ADMM sweep. The dual update step can be written as
\begin{equation}
    \lambda^{(k+1)} = \lambda^{(k)} + \rho\, d^{(k+1)},
    \label{eq:dual_update_residual_form_conv}
\end{equation}
where $d^{(k+1)}$ is the consensus residual at iteration $k+1$ defined in \eqref{residual}. This relation shows that changes in the dual variable are directly driven by the residual of the consensus constraint.

\begin{assumption}[Bounded iterates]
\label{ass:admm_bounded}
The sequence of iterates
\[
    \Xi^{(k)} := \big(z^{(k)}, f^{(k)}, r^{(k)}, s^{(k)}, \lambda^{(k)}\big)
\]
generated by Algorithm~\ref{alg:admm_topology} is bounded.
In other words, there exists a compact set such that all these iterates remain inside it for all $k\ge 0$.
\end{assumption}

This is a standard technical assumption in nonconvex ADMM analysis and can be enforced in practice, e.g., by projecting the dual variable onto a large box.

\begin{assumption}[Exact block updates]
\label{ass:admm_exact}
At every iteration $k$, each block subproblem in Algorithm~\ref{alg:admm_topology} is solved exactly. 
Block~1 returns an exact minimizer $(z^{(k)},f^{(k)})$ of the convex QP in \eqref{eq:block1_full} subject to the constraints collected in $\mathcal{C}$. 
Block~2 returns an exact minimizer $r^{(k)}$ of the QUBO problem \eqref{eq:block2_r_problem} over $\{0,1\}^m$. 
Block~3 updates $s^{(k)}$ using the explicit formula in \eqref{eq:block3_closed_form}.
\end{assumption}

\begin{assumption}[Strong convexity of Block~1]
\label{ass:admm_strong_convex}
For any fixed $(r,s,\lambda)$, the objective of the Block~1 subproblem is strongly convex in the variables $(z,f)$ over the feasible set defined by $\mathcal{C}$. 
This strong convexity is induced, for example, by the quadratic degree penalty $J_{\deg}(z)$ together with the quadratic penalty $\tfrac{\rho}{2}\|z-r+s\|_2^2$ in \eqref{eq:aug_lagrangian_three_block}, and possibly by a small convex regularization in $f$. For example, one can add a small term $\frac{\epsilon}{2}\|f\|_2^2$ with $\epsilon>0$ to the objective of Block~1 to ensure strong convexity with respect to $f$.

\end{assumption}

\begin{assumption}[Regularity]
\label{ass:admm_regularity}
The augmented Lagrangian $\mathcal{L}$ in \eqref{eq:aug_lagrangian_three_block} is proper, lower semicontinuous, and bounded from below on the domain defined by the constraints in $\mathcal{C}$ and by $r\in\{0,1\}^m$. 
Moreover, $\mathcal{L}$ satisfies the Kurdyka--\L ojasiewicz property. 
This holds, for example, because $J(z)$ is a sum of linear and convex quadratic terms, $\iota_{\mathcal{C}}$ is the indicator of a polyhedral set, the remaining terms in \eqref{eq:aug_lagrangian_three_block} are quadratic in $(z,r,s,\lambda)$, and the overall function is semialgebraic in $(z,f,r,s,\lambda)$.
\end{assumption}

Under these assumptions, we can use a Lyapunov candidate to analyze the behavior of the three-block ADMM. The idea is to combine the augmented Lagrangian and a quadratic penalty on the residual into a single scalar function that does not increase along the iterations and that is bounded from below. We define
\begin{equation}
    V^{(k)} :=
    \mathcal{L}\big(z^{(k)},f^{(k)},r^{(k)},s^{(k)},\lambda^{(k)}\big)
    + \frac{\kappa}{2}\,\big\|d^{(k)}\big\|_2^2,
    \label{eq:Lyapunov_candidate_conv}
\end{equation}
where $\kappa>0$ is a tuning parameter. The first term in $V^{(k)}$ measures the objective plus the constraint penalties at the current iterate. The second term penalizes the violation of the consensus constraint. If the algorithm converges to a point where the consensus constraint is satisfied, then both terms should approach finite limits.

To study how $V^{(k)}$ changes from one iteration to the next, we write
\begin{align}
V^{(k+1)} - V^{(k)}
&=
\mathcal{L}\big(z^{(k+1)},f^{(k+1)},r^{(k+1)},s^{(k+1)},\lambda^{(k+1)}\big)
\nonumber\\
&\quad - \mathcal{L}\big(z^{(k)},f^{(k)},r^{(k)},s^{(k)},\lambda^{(k)}\big)
\nonumber\\
&\quad + \frac{\kappa}{2}\left(\big\|d^{(k+1)}\big\|_2^2 - \big\|d^{(k)}\big\|_2^2\right).
\label{eq:V_diff_step1_conv}
\end{align}
For notational convenience, we denote by
\[
    \mathcal{L}^{k}(\lambda) :=
    \mathcal{L}\big(z^{(k)},f^{(k)},r^{(k)},s^{(k)},\lambda\big)
\]
the augmented Lagrangian evaluated at the primal variables of iteration $k$ and at a generic dual vector $\lambda$. Then we can rewrite \eqref{eq:V_diff_step1_conv} as
\begin{align}
V^{(k+1)} - V^{(k)}
&=
\mathcal{L}^{k+1}(\lambda^{(k+1)}) - \mathcal{L}^{k}(\lambda^{(k)})
\nonumber\\
&\quad + \frac{\kappa}{2}\left(\big\|d^{(k+1)}\big\|_2^2 - \big\|d^{(k)}\big\|_2^2\right).
\label{eq:V_diff_step2_conv}
\end{align}

We now split the change in the augmented Lagrangian into a dual part and a primal part. We add and subtract $\mathcal{L}^{k+1}(\lambda^{(k)})$ inside the first difference:
\begin{align}
\mathcal{L}^{k+1}(\lambda^{(k+1)}) - \mathcal{L}^{k}(\lambda^{(k)})
&=
\big[\mathcal{L}^{k+1}(\lambda^{(k+1)}) - \mathcal{L}^{k+1}(\lambda^{(k)})\big]
\nonumber\\
&\quad + \big[\mathcal{L}^{k+1}(\lambda^{(k)}) - \mathcal{L}^{k}(\lambda^{(k)})\big].
\label{eq:L_split_conv}
\end{align}
The first bracket corresponds to the effect of updating the dual variable from $\lambda^{(k)}$ to $\lambda^{(k+1)}$ while the primal variables are fixed at $(z^{(k+1)},f^{(k+1)},r^{(k+1)},s^{(k+1)})$. The second bracket corresponds to the effect of updating the primal variables from iteration $k$ to iteration $k+1$ while the dual variable remains equal to $\lambda^{(k)}$.

We first consider the dual part. In the augmented Lagrangian in \eqref{eq:aug_lagrangian_three_block}, the dual variable appears only in the linear term $\lambda^\top(z-r+s)$. Therefore, for any fixed $(z,f,r,s)$ and any two dual vectors $\lambda$ and $\tilde{\lambda}$, we have
\[
    \mathcal{L}(z,f,r,s,\tilde{\lambda}) - \mathcal{L}(z,f,r,s,\lambda)
    = (\tilde{\lambda} - \lambda)^\top (z - r + s).
\]
If we apply this identity with $\lambda=\lambda^{(k)}$, $\tilde{\lambda}=\lambda^{(k+1)}$ and the primal variables set to their $(k+1)$ values, then $z-r+s$ is equal to $d^{(k+1)}$ and we obtain
\begin{align}
\mathcal{L}^{k+1}(\lambda^{(k+1)}) - \mathcal{L}^{k+1}(\lambda^{(k)})
&=
\big(\lambda^{(k+1)} - \lambda^{(k)}\big)^\top d^{(k+1)}.
\label{eq:dual_change_step_conv}
\end{align}
Using the dual update \eqref{eq:dual_update_residual_form_conv}, we have $\lambda^{(k+1)} - \lambda^{(k)} = \rho\, d^{(k+1)}$, and therefore
\begin{align}
\mathcal{L}^{k+1}(\lambda^{(k+1)}) - \mathcal{L}^{k+1}(\lambda^{(k)})
&=
\rho\, \big\|d^{(k+1)}\big\|_2^2.
\label{eq:dual_change_final_conv}
\end{align}

We now consider the primal part, that is the second bracket in \eqref{eq:L_split_conv}. This term is
\[
    \mathcal{L}^{k+1}(\lambda^{(k)}) - \mathcal{L}^{k}(\lambda^{(k)}),
\]
and it measures the change in the augmented Lagrangian due to the updates of $(z,f)$, $r$ and $s$ at fixed dual variable $\lambda^{(k)}$. We follow the three primal blocks in Algorithm~\ref{alg:admm_topology}.

In Block~1, the pair $(z^{(k+1)},f^{(k+1)})$ is obtained by minimizing $\mathcal{L}$ with respect to $(z,f)$ under the convex constraints in $\mathcal{C}$, with $(r,s,\lambda)$ fixed at $(r^{(k)},s^{(k)},\lambda^{(k)})$. By strong convexity of the Block~1 objective in $(z,f)$, there exists a constant $\sigma_1>0$ such that
\begin{align}
\mathcal{L}\big(z^{(k+1)},&f^{(k+1)},r^{(k)},s^{(k)},\lambda^{(k)}\big)
\nonumber\\&\le
\mathcal{L}\big(z^{(k)},f^{(k)},r^{(k)},s^{(k)},\lambda^{(k)}\big)
\nonumber\\
&\quad - \sigma_1 \big\|(z^{(k+1)},f^{(k+1)}) - (z^{(k)},f^{(k)})\big\|_2^2.
\label{eq:block1_descent_conv}
\end{align}
This inequality means that, if $(z,f)$ moves, then the augmented Lagrangian decreases by at least a quadratic amount in the step size of $(z,f)$.

In Block~2, the vector $r^{(k+1)}$ is obtained by minimizing $\mathcal{L}$ with respect to $r$ in the set $\{0,1\}^m$ with $(z,f,s,\lambda)$ fixed at $(z^{(k+1)},f^{(k+1)},s^{(k)},\lambda^{(k)})$. Since $r^{(k+1)}$ is a minimizer of this subproblem, we have
\begin{align}
\mathcal{L}\big(z^{(k+1)},&f^{(k+1)},r^{(k+1)},s^{(k)},\lambda^{(k)}\big)
\nonumber\\&\le
\mathcal{L}\big(z^{(k+1)},f^{(k+1)},r^{(k)},s^{(k)},\lambda^{(k)}\big).
\label{eq:block2_descent_conv}
\end{align}
Here, we only know that the augmented Lagrangian does not increase in Block~2.

In Block~3, the vector $s^{(k+1)}$ is obtained by minimizing $\mathcal{L}$ with respect to $s$ in $\mathbb{R}^m$ with $(z,f,r,\lambda)$ fixed at $(z^{(k+1)},f^{(k+1)},r^{(k+1)},\lambda^{(k)})$. This minimization has the closed-form solution in \eqref{eq:block3_closed_form}. The dependence of $\mathcal{L}$ on $s$ is strictly convex and quadratic, so there exists a constant $\sigma_2>0$ such that
\begin{align}
\mathcal{L}\big(z^{(k+1)},&f^{(k+1)},r^{(k+1)},s^{(k+1)},\lambda^{(k)}\big)
\nonumber\\&\le
\mathcal{L}\big(z^{(k+1)},f^{(k+1)},r^{(k+1)},s^{(k)},\lambda^{(k)}\big)
\nonumber\\
&\quad - \sigma_2 \big\|s^{(k+1)} - s^{(k)}\big\|_2^2.
\label{eq:block3_descent_conv}
\end{align}

If we combine \eqref{eq:block1_descent_conv}, \eqref{eq:block2_descent_conv} and \eqref{eq:block3_descent_conv}, and we identify
\[
    \mathcal{L}^{k}(\lambda^{(k)}) =
    \mathcal{L}\big(z^{(k)},f^{(k)},r^{(k)},s^{(k)},\lambda^{(k)}\big),
\]
and
\[
    \mathcal{L}^{k+1}(\lambda^{(k)}) =
    \mathcal{L}\big(z^{(k+1)},f^{(k+1)},r^{(k+1)},s^{(k+1)},\lambda^{(k)}\big),
\]
we obtain the bound
\begin{align}
\mathcal{L}^{k+1}(\lambda^{(k)}) - &\mathcal{L}^{k}(\lambda^{(k)})
\le
\nonumber\\&- \sigma_1 \big\|(z^{(k+1)},f^{(k+1)}) - (z^{(k)},f^{(k)})\big\|_2^2
\nonumber\\
& - \sigma_2 \big\|s^{(k+1)} - s^{(k)}\big\|_2^2.
\label{eq:primal_change_bound_conv}
\end{align}

Now we substitute \eqref{eq:L_split_conv}, \eqref{eq:dual_change_final_conv} and \eqref{eq:primal_change_bound_conv} into \eqref{eq:V_diff_step2_conv}. We get
\begin{align}
V^{(k+1)} - V^{(k)}
&\le
\rho\, \big\|d^{(k+1)}\big\|_2^2
\nonumber\\
&\quad - \sigma_1 \big\|(z^{(k+1)},f^{(k+1)}) - (z^{(k)},f^{(k)})\big\|_2^2
\nonumber\\
&\quad - \sigma_2 \big\|s^{(k+1)} - s^{(k)}\big\|_2^2
\nonumber\\
&\quad + \frac{\kappa}{2}\left(\big\|d^{(k+1)}\big\|_2^2 - \big\|d^{(k)}\big\|_2^2\right).
\label{eq:V_diff_step3_conv}
\end{align}
Collecting the terms with $d^{(k+1)}$ and $d^{(k)}$, we can rewrite this as
\begin{align}
V^{(k+1)} - V^{(k)}
&\le
\left(\rho + \frac{\kappa}{2}\right)\big\|d^{(k+1)}\big\|_2^2
- \frac{\kappa}{2}\big\|d^{(k)}\big\|_2^2
\nonumber\\
&\quad - \sigma_1 \big\|(z^{(k+1)},f^{(k+1)}) - (z^{(k)},f^{(k)})\big\|_2^2
\nonumber\\
&\quad - \sigma_2 \big\|s^{(k+1)} - s^{(k)}\big\|_2^2.
\label{eq:V_diff_final_conv}
\end{align}

Inequality \eqref{eq:V_diff_final_conv} shows that the Lyapunov candidate $V^{(k)}$ can only increase through the terms that depend on the residuals, while the changes in $(z,f)$ and in $s$ always decrease the augmented Lagrangian by at least a quadratic amount. Combined with the lower boundedness and regularity of $\mathcal{L}$ in Assumption~\ref{ass:admm_regularity}, this type of inequality is standard in the convergence analysis of nonconvex ADMM schemes, see for example \cite{hong2016convergence,wang2019global,guo2017convergence}.

\begin{theorem}
\label{thm:admm_convergence}
Suppose that Assumptions~\ref{ass:admm_bounded}--\ref{ass:admm_regularity} hold and that the penalty parameter $\rho$ is chosen sufficiently large.
Then the sequence $\{\Xi^{(k)}\}$ generated by Algorithm~\ref{alg:admm_topology} has the following properties:
\begin{enumerate}
    \item The consensus residual converges to zero,
    \[
        d^{(k)} \to 0 \quad \text{as } k\to\infty,
    \]
    so that the coupling constraint $z - r + s = 0$ is satisfied at any limit point.
    \item The sequence has square-summable increments, in the sense that
    \begin{align}
        \sum_{k=0}^{\infty} \Big(
       & \big\|(z^{(k+1)},f^{(k+1)}) - (z^{(k)},f^{(k)})\big\|_2^2
        \nonumber\\&+ \big\|s^{(k+1)} - s^{(k)}\big\|_2^2
        \Big) < +\infty. 
    \end{align}
    \item Every limit point of $\{\Xi^{(k)}\}$ is a stationary point of the augmented Lagrangian $\mathcal{L}$ under the convex constraints on $(z,f)$ and the binary constraints on $r$.
\end{enumerate}
In particular, if the sequence $\{\Xi^{(k)}\}$ has a unique limit point, then the entire sequence converges to this stationary point.
\end{theorem}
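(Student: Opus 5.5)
The plan is to make the inequality \eqref{eq:V_diff_final_conv} into a genuine sufficient-decrease inequality. The only positive term there is $(\rho+\kappa/2)\|d^{(k+1)}\|_2^2$, so I will bound the residual by the $s$-increments using the optimality condition of Block~3.

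\textbf{Step 1: express the residual through $s$.} Block~3 is an unconstrained smooth minimization. Its first-order condition at iteration $k+1$ is
\[
\lambda^{(k)}+\rho\, d^{(k+1)}+\beta s^{(k+1)}=0 .
\]
Combined with the dual update \eqref{eq:dual_update_residual_form_conv}, this gives $\lambda^{(k+1)}=-\beta s^{(k+1)}$ for every $k\ge 0$. Hence
\[
\rho\, d^{(k+1)}=\lambda^{(k+1)}-\lambda^{(k)}=-\beta\big(s^{(k+1)}-s^{(k)}\big)\quad (k\ge 1),
\]
so that $\|d^{(k+1)}\|_2^2=(\beta/\rho)^2\|s^{(k+1)}-s^{(k)}\|_2^2$. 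If $\beta=0$, this already forces $d^{(k)}=0$ for $k\ge 2$, and the claims become easy.

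\textbf{Step 2: fix the constants and obtain descent.} The $s$-dependence of $\mathcal L$ has curvature $\rho+\beta$, so \eqref{eq:block3_descent_conv} holds with $\sigma_2=(\rho+\beta)/2$. I will drop the telescoping $\kappa$-terms or absorb them by redefining $V$. Substituting Step~1 into \eqref{eq:V_diff_final_conv} then yields
\[
V^{(k+1)}-V^{(k)}\le -\sigma_1\|\Delta(z,f)\|_2^2-\Big(\tfrac{\rho+\beta}{2}-\tfrac{\beta^2}{\rho}\big(1+\tfrac{\kappa}{2\rho}\big)\Big)\|\Delta s\|_2^2 .
\]
The coefficient in parentheses is positive once $\rho$ is large relative to $\beta$ and $\kappa$; this is exactly where the hypothesis ``$\rho$ sufficiently large'' is used.

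\textbf{Step 3: summability and residual convergence.} By Assumption~\ref{ass:admm_bounded} the iterates lie in a compact set, and by Assumption~\ref{ass:admm_regularity} $\mathcal L$ is finite and bounded below there. Therefore $V^{(k)}$ is bounded below and nonincreasing. Telescoping the inequality from Step~2 gives item~2. In particular $\Delta s\to 0$, and then Step~1 gives $d^{(k)}\to 0$, which is item~1. It also follows that $\lambda^{(k+1)}-\lambda^{(k)}\to 0$.

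\textbf{Step 4: stationarity of limit points.} Take a convergent subsequence. Since $\{0,1\}^m$ is finite, after passing to a further subsequence $r^{(k)}$ is eventually constant, equal to some $r^*$. The Block~1 variational inequality and the Block~3 equation pass to the limit by continuity, because all increments and $\lambda^{(k+1)}-\lambda^{(k)}$ vanish. This makes $(z^*,f^*)$ a KKT point for the convex constraints in $\mathcal C$ and $s^*$ stationary.

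The main obstacle is the $r$-block, since Block~2 gives no descent and no control of $r$-increments. I will handle it by taking limits in the global-minimality inequality
\[
\mathcal L(z^{(k+1)},f^{(k+1)},r^{(k+1)},s^{(k)},\lambda^{(k)})\le \mathcal L(z^{(k+1)},f^{(k+1)},r,s^{(k)},\lambda^{(k)})\quad \forall r\in\{0,1\}^m .
\]
Because $s^{(k)}-s^{(k+1)}\to 0$, and $\mathcal L$ is continuous in the continuous variables, this shows that $r^*$ minimizes $\mathcal L(z^*,f^*,\cdot,s^*,\lambda^*)$ over $\{0,1\}^m$. That is the appropriate notion of stationarity for the binary block, which gives item~3.

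Finally, a bounded sequence with a unique limit point converges to it. The KL property could additionally be invoked, following \cite{wang2019global}, to get whole-sequence convergence without the uniqueness hypothesis, but that is not needed for the statement.
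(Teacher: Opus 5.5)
Your proof is correct in substance, and it is more complete than the paper's own argument, which is the Lyapunov derivation preceding the theorem. The paper stops at \eqref{eq:V_diff_final_conv} for $V^{(k)}=\mathcal{L}+\frac{\kappa}{2}\|d^{(k)}\|_2^2$ and then appeals to the nonconvex ADMM literature and to the KL property in Assumption~\ref{ass:admm_regularity}. It never bounds the positive term $(\rho+\frac{\kappa}{2})\|d^{(k+1)}\|_2^2$. So, as written, it does not establish that $V^{(k)}$ decreases, and the role of ``$\rho$ sufficiently large'' is never made explicit.

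Your Step~1 supplies exactly the missing ingredient. Block~3 optimality together with the dual update gives $\lambda^{(k)}=-\beta s^{(k)}$ for $k\ge 1$, hence $\|d^{(k+1)}\|_2=(\beta/\rho)\|s^{(k+1)}-s^{(k)}\|_2$. This is the standard device of \cite{wang2019global}, where the last block is smooth and the multiplier is read off its optimality condition.

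What your route buys is a self-contained proof. Items~1--2 follow by telescoping, item~3 by passing to the limit in the blockwise optimality conditions, and the uniqueness clause by compactness, with no use of the KL property. It also makes the parameter condition explicit: as $\kappa\downarrow 0$ your coefficient becomes $(\rho-\beta)(\rho+2\beta)/(2\rho)$, so the requirement is essentially $\rho>\beta$. The paper's experiments ($\rho=20$, $\beta\ge 200$) violate this sufficient condition, so those runs are not covered by this argument.

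Step~4 needs one extra line. At iteration $k_j$, Block~1 is solved with $r^{(k_j-1)}$, not $r^{(k_j)}$. Freezing $r^{(k_j)}=r^*$ along the subsequence therefore does not by itself let the Block~1 condition pass to the limit with $r^*$. Your remark that Block~2 gives no control of $r$-increments suggests you think the needed control is unavailable, but it follows from items~1--2. Since $r^{(k)}=z^{(k)}+s^{(k)}-d^{(k)}$, you have
$\|r^{(k+1)}-r^{(k)}\|_2\le\|z^{(k+1)}-z^{(k)}\|_2+\|s^{(k+1)}-s^{(k)}\|_2+\|d^{(k+1)}\|_2+\|d^{(k)}\|_2\to 0$.
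The entries of $r^{(k+1)}-r^{(k)}$ lie in $\{-1,0,1\}$, so the entire sequence $r^{(k)}$ is eventually constant. With this line added, the Block~1 variational inequality and the Block~3 equation pass to the limit with $r^*$ as you describe. Your global-minimality limit then handles the binary block cleanly.
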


\section{Quantum Optimization of the Binary Block}
\label{sec:quantum_block}

In this section, we describe how the binary Block~2 of the three-block ADMM is mapped to a QUBO problem and then encoded as a quantum Hamiltonian. The ground state of this Hamiltonian corresponds to an optimal or near-optimal binary edge configuration for the communication topology. We then explain how the QITE algorithm is used as a quantum solver for this Hamiltonian.

\subsection{Binary ADMM Block as QUBO Hamiltonian}
\label{subsec:qubo_block}

In the three-block ADMM scheme in Section~\ref{sec:three_block_admm}, the binary block updates the proxy vector $r \in \{0,1\}^m$ by minimizing the augmented Lagrangian with respect to $r$, while all other variables are fixed. The Block~2 subproblem in \eqref{eq:block2_r_problem} can be written in the compact form
\begin{equation}
  \min_{r \in \{0,1\}^m}
   \Phi(r)
  := \lambda^\top (z - r + s)
   + \frac{\rho}{2}\|z - r + s\|_2^2
   + \mu \Big(\sum_{i=1}^m r_i\Big)^2,
  \label{eq:block2_cost}
\end{equation}
where, for simplicity of notation, we dropped the iteration superscripts and we used $\mu \ge 0$ to denote a small cardinality penalty.

The first two terms in \eqref{eq:block2_cost} come directly from the augmented Lagrangian in \eqref{eq:aug_lagrangian_three_block}. If we only keep these two terms, the objective is separable in the components of $r$. Indeed, the quadratic term in $\|z - r + s\|_2^2$ produces diagonal contributions in the Hessian with respect to $r$, and the linear term $\lambda^\top(z - r + s)$ only shifts each variable individually. In this ideal case, every $r_i$ could be updated independently by comparing two scalar costs, since $r_i \in \{0,1\}$.

However, in practice, we observed that a completely separable binary block can lead to slower convergence and oscillations in the ADMM iterations, because each edge variable reacts locally to the current $z$ and $s$ values, without any direct coupling with the other edges. Moreover, the separable structure does not encode any preference for the total number of active links. For these reasons, we add the heuristic cardinality regularization
\[
  J_{\text{card}}(r)
  := \mu \Big(\sum_{i=1}^m r_i\Big)^2,
\]
which appears as the last term in \eqref{eq:block2_cost}. This penalty is small compared to the main terms, but it has two useful effects. First, it discourages both extremely dense and extremely sparse topologies, because the cost grows quadratically with the number of active edges. Second, it introduces a controlled coupling between all binary variables. This coupling produces a non-diagonal quadratic form in $r$ and tends to stabilize the ADMM dynamics by correlating the updates of different edges.

To make the quantum mapping explicit, we rewrite \eqref{eq:block2_cost} in the standard QUBO form. If we expand the squared norm and the cardinality term, we can always write
\begin{equation}
  \Phi(r) = r^\top Q r + q^\top r,
  \quad r \in \{0,1\}^m,
  \label{eq:qubo_standard}
\end{equation}
where $Q \in \mathbb{R}^{m \times m}$ is a symmetric matrix collecting all quadratic coefficients and $q \in \mathbb{R}^m$ is a linear coefficient vector. In particular, the diagonal of $Q$ and the entries of $q$ come from the expansion of $\lambda^\top(z - r + s)$ and $\|z - r + s\|_2^2$, while the off-diagonal entries of $Q$ are created by the cardinality term $J_{\text{card}}(r)$. In the absence of the cardinality penalty, $Q$ would be diagonal and the problem would be separable.

The QUBO in \eqref{eq:qubo_standard} is a purely combinatorial problem in the binary vector $r$. To use a quantum algorithm, it is convenient to transform it into a Hamiltonian acting on $m$ qubits. This is done by introducing Pauli-$Z$ spin operators and by encoding each binary decision into a spin variable. Following a standard construction, we associate to each QUBO variable $r_i \in \{0,1\}$ a Pauli-$Z$ operator $Z_i$ with eigenvalues $\pm 1$ and we use the linear mapping
\begin{equation}
  r_i = \frac{1 - Z_i}{2},
  \quad i = 1,\dots,m.
  \label{HamiltonianT}
\end{equation}

If we substitute \eqref{HamiltonianT} into \eqref{eq:qubo_standard}, and we expand all products $r_i r_j$ in terms of $Z_i$ and $Z_j$, we obtain an operator of the form
\begin{equation}
  H = \sum_{i=1}^m h_i Z_i
      + \sum_{1 \le i < j \le m} J_{ij} Z_i Z_j
      + \text{const},
  \label{eq:ising_form}
\end{equation}
where $h_i$ and $J_{ij}$ are real coefficients that can be computed by simple algebraic manipulations of $Q$ and $q$, and the constant shift does not change the ground state. This operator $H$ is a Hamiltonian acting on $m$ qubits, written as a linear combination of tensor products of Pauli-$Z$ matrices. 

From a quantum computing point of view, the Hamiltonian $H$ in \eqref{eq:ising_form} is the encoding of the topology design binary block into the language of quantum mechanics. Each bitstring $r \in \{0,1\}^m$ corresponds to a basis state of the $m$-qubit system, and the QUBO cost becomes the energy of that state. The optimal edge configuration is obtained by finding the ground state of $H$, that is, the eigenstate with the minimum eigenvalue. 

\subsection{QITE Solver}
\label{subsec:qite}

After we map the binary ADMM block to the Hamiltonian in \eqref{eq:ising_form}, the problem becomes the following: we want to find the ground quantum state of $H$, that is, the quantum state with the smallest energy. Each bitstring corresponds to one quantum basis state, and the energy of that quantum state is equal to the QUBO cost. If we can prepare the quantum ground state of $H$, we can read out a good binary solution by measuring it. Note that each time we talk about a state in this section is a quantum state and not a state of MAS. We can think of the energy of $H$ as a landscape with many hills and valleys. The ground state is the bottom of the deepest valley. 

The QITE algorithm is a way to slide down this landscape. In quantum mechanics, time evolution is usually defined with the Schrödinger equation. If we replace the real-time variable $t$ with $-{\rm i}\tau$, where ${\rm i}$ is the imaginary unit, we obtain an imaginary-time equation. This imaginary-time evolution damps the high-energy components of the state while preserving the low-energy components. After some imaginary time, the state tends to align with the ground state of $H$.

Implementing this ideal imaginary-time evolution directly is difficult on current noisy quantum hardware, because it would require non-unitary operations. Instead, QITE uses an approximate, variational approach. The main idea is to restrict the search for the ground state to a family of states generated by a fixed quantum circuit with tunable parameters. We write
\[
  |\psi(\theta)\rangle = U(\theta)\,|0\rangle,
\]
where $U(\theta)$ is called an ansatz and $\theta$ is a vector of real parameters. In our setting, $U(\theta)$ is a hardware-efficient circuit acting on the $m$ qubits associated with the edge variables. It contains several layers of single-qubit rotations and entangling gates between nearby qubits. The structure of $U(\theta)$ is fixed in advance, and only the rotation angles in these gates are adjusted.

\begin{figure}[t]
  \centering
  \includegraphics[width=0.85\linewidth]{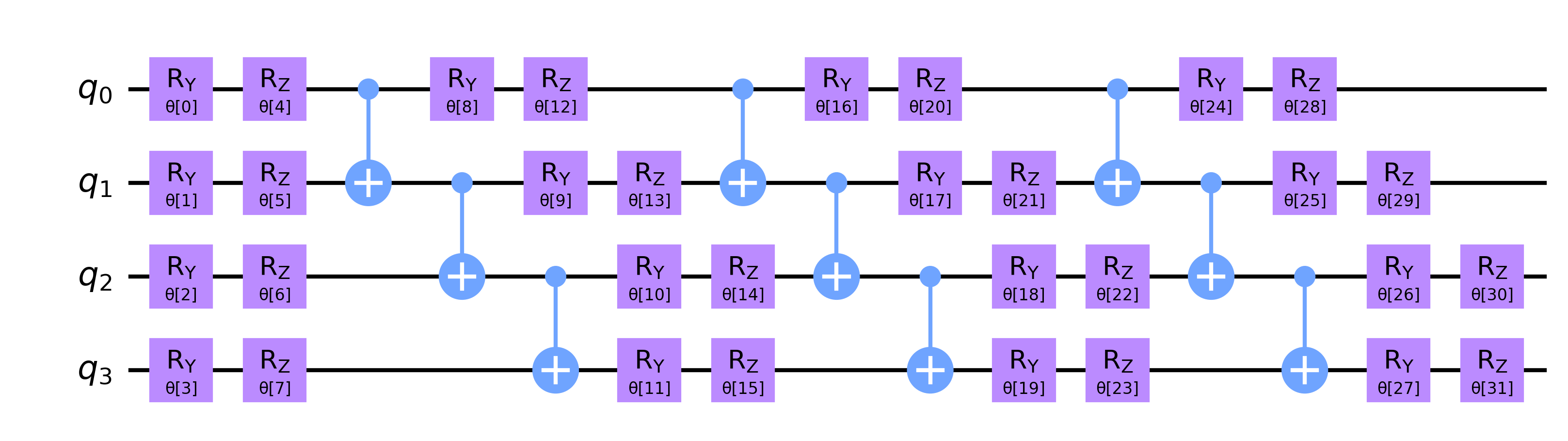}
  \caption{Example of an ansatz circuit $U(\theta)$ for QITE}
  \label{fig:qite_ansatz}
\end{figure}

QITE then interprets the imaginary-time evolution as a differential equation, not on the full state space, but on the parameter vector $\theta$. Conceptually, we want $\theta(\tau)$ to evolve in such a way that the ansatz state $|\psi(\theta(\tau))\rangle$ follows the ideal imaginary-time evolution of $H$ as closely as possible. This is why QITE is often described as a quantum differential equation solver: the imaginary unit in the original Schrödinger equation leads to an imaginary-time equation, and QITE approximates its solution by updating the circuit parameters rather than explicitly evolving the full wavefunction.

In practice, the algorithm proceeds in small imaginary-time steps. At each step, the following operations are performed. First, the current parameters $\theta$ are fixed and the ansatz circuit $U(\theta)$ is executed on the quantum device to prepare the current state $|\psi(\theta)\rangle$. Second, expectation values of certain observables that involve the Hamiltonian $H$ are measured, and the derivatives of the state with respect to the parameters are calculated. These measurements provide the numerical coefficients of a small linear system that describes how $\theta$ should change to approximate one step of imaginary-time evolution. Third, the small linear system is solved, yielding an update $\Delta\theta$. Finally, the parameters are moved in that direction, for example, \( \theta \leftarrow \theta + \Delta\theta.\) This procedure is then repeated for several steps. Each step is designed to decrease the energy expectation $\langle \psi(\theta) | H | \psi(\theta) \rangle$, so the parameters gradually move toward a region where the ansatz state has low energy.
\begin{figure}[t]
  \centering
  \begin{tikzpicture}[
    scale=0.75,
    transform shape,
    font=\scriptsize,
    node distance=5mm,
    >=Stealth,
    every node/.style={align=center},
    dataBox/.style={
      draw,
      rounded corners,
      thick,
      fill=gray!5,
      draw=gray!60,
      inner sep=2pt,
      minimum width=28mm,
      minimum height=5mm
    },
    quantumBox/.style={
      draw,
      rounded corners,
      thick,
      fill=blue!5,
      draw=blue!60,
      inner sep=2pt,
      minimum width=28mm,
      minimum height=5mm
    },
    classicalBox/.style={
      draw,
      rounded corners,
      thick,
      fill=orange!5,
      draw=orange!70!black,
      inner sep=2pt,
      minimum width=28mm,
      minimum height=5mm
    },
    line/.style={-Stealth, thick}
  ]

    % Nodes (vertical flow)
    \node[dataBox]      (inputH) {QUBO data $(Q,q)$\\and Hamiltonian $H$};
    \node[dataBox,      below=of inputH] (init)
        {Initial parameters\\$\theta^{(0)}$};
    \node[quantumBox,   below=of init]   (prepare)
        {Prepare ansatz state\\$|\psi(\theta)\rangle = U(\theta)|0\rangle$\\on quantum device};
    \node[quantumBox,   below=of prepare] (measure)
        {Measure expectation\\values w.r.t.\ $H$};
    \node[classicalBox, below=of measure] (linear)
        {Build and solve small\\linear system for $\Delta\theta$};
    \node[classicalBox, below=of linear]  (update)
        {Update parameters\\$\theta \leftarrow \theta + \Delta\theta$};
    \node[dataBox,      below=of update, yshift=-1mm] (output)
        {Approximate ground-state\\parameters $\theta^\star$ and\\low-energy bitstrings};

    % Loop label on the right
    \node[anchor=west, xshift=7mm] (looplabel) at ($(measure.east)!0.5!(update.east)$)
        {Imaginary-time\\steps};

    % Main downward arrows
    \draw[line] (inputH) -- (init);
    \draw[line] (init)   -- (prepare);
    \draw[line] (prepare) -- (measure);
    \draw[line] (measure) -- (linear);
    \draw[line] (linear)  -- (update);
    \draw[line] (update)  -- (output);

    % Feedback loop from update back to prepare
    \draw[line]
      ([xshift=4mm]update.east) to[out=0,in=0,looseness=1.2]
      ([xshift=4mm]prepare.east);

    % Dashed box highlighting the iterative QITE loop
    \draw[dashed, rounded corners]
      ($(prepare.north west)+(-2mm,2mm)$) rectangle
      ($(update.south east)+(2mm,-2mm)$);

  \end{tikzpicture}
  \caption{Schematic view of the QITE procedure}
  \label{fig:qite_workflow}
\end{figure}
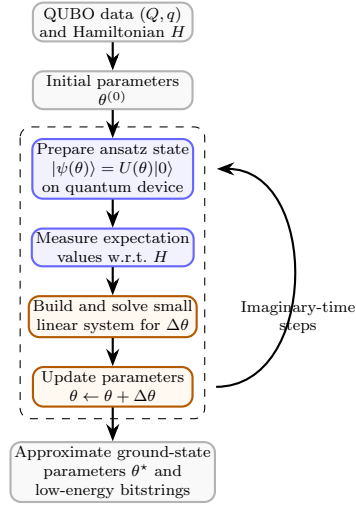

After a finite number of imaginary-time steps, we obtain a parameter vector $\theta^\star$ for which the ansatz state $|\psi(\theta^\star)\rangle$ has low energy with respect to $H$. This state is an approximation of the ground quantum state of the QUBO Hamiltonian. To extract a binary topology from it, we measure $|\psi(\theta^\star)\rangle$ in the computational basis many times. Each measurement produces a bitstring $r \in \{0,1\}^m$, which encodes one candidate set of active edges. The probabilities of these bitstrings depend on the amplitudes of the quantum state. In our implementation, we collect the bitstrings with the highest empirical probabilities, evaluate their exact QUBO cost $z^\top Q z + q^\top z$ on a classical computer, and select the bitstring with the lowest cost as the solution of the binary block for that ADMM iteration.

In the overall three-block ADMM algorithm, this QITE-based solver is used as a module that proposes a binary proxy vector $r$. At each ADMM iteration, the current QUBO is converted to a Hamiltonian, QITE is run for a limited number of imaginary-time steps, and a small set of candidate bitstrings is sampled and evaluated. If the new candidate improves the QUBO objective, it is accepted and used as the updated $r$; otherwise, the previous binary solution can be kept. In this way, the quantum routine explores the combinatorial space of communication topologies, while the ADMM framework maintains a clear structure and convergence properties for the full topology design algorithm.

The overall closed-loop dynamic procedure that combines ADMM-based topology optimization, QITE-based binary updates, and consensus dynamics is summarized in Algorithm~\ref{alg:closed_loop_q_admm}.

\begin{algorithm}[!t]
  \small
  \caption{Dynamic topology design}
  \label{alg:closed_loop_q_admm}
  \begin{algorithmic}[1]
    \State \textbf{Inputs:} initial states $\boldsymbol{x}(0)$ (and $\boldsymbol{v}(0)$), update times $\{t_\ell\}$, degree bound $\gamma$, ADMM parameters $\rho,\beta,\mu$, QITE settings (ansatz, steps, shots).
    \State Choose initial connected graph $\mathcal{G}_0$ and Laplacian $L_0$ satisfying Assumption~\ref{ass:connectivity}.
    \State Set $\ell \gets 0$.
    \While{closed-loop operation is active}
      \State \textbf{Consensus evolution on $[t_\ell,t_{\ell+1})$:}
      \State Simulate consensus with $L_\ell$:
      \State \hspace{0.4cm}First-order: \eqref{eq:csys1_tv}.
      \State \hspace{0.4cm}Second-order: \eqref{eq:csys2_tv_x}--\eqref{eq:csys2_tv}.
      \State At $t_{\ell+1}$, measure agent positions (and velocities).
      \State Update edge weights $w_{ij}$ via \eqref{eq:edge_weight}.
      \State Form MIQP \eqref{eq:MIQP_topology} with costs \eqref{eq:topology_linear_cost}, \eqref{eq:degree_quadratic_cost} and constraints \eqref{eq:MIQP_deg}--\eqref{eq:MIQP_flow_root}.
      \State \textbf{Initialize ADMM:}
      \State Choose $(z^{(0)},f^{(0)},r^{(0)},s^{(0)},\lambda^{(0)})$, set $k \gets 0$.
      \While{ADMM stopping criterion not met}
        \State $k \gets k + 1$.
        \State \textbf{Block 1 (classical convex subproblem):}
        \State Given $r^{(k-1)},s^{(k-1)},\lambda^{(k-1)}$, solve \eqref{eq:block1_full}
        \State for $(z^{(k)},f^{(k)})$ over $\mathcal{C}$.
        \State \textbf{Block 2 (quantum binary subproblem):}
        \State Build QUBO cost $\Phi(r)$ in \eqref{eq:block2_cost}.
        \State Map QUBO to Ising $H^{(k)}$ via \eqref{HamiltonianT}, \eqref{eq:ising_form}.
        \State Run QITE for $H^{(k)}$:
        \State \hspace{0.4cm}Initialize ansatz $U(\theta)$ (Fig.~\ref{fig:qite_ansatz}).
        \For{each imaginary-time step}
          \State Prepare $|\psi(\theta)\rangle = U(\theta)|0\rangle$.
          \State Measure expectations required by QITE.
          \State Build and solve the small linear system for $\Delta\theta$.
          \State Update $\theta \gets \theta + \Delta\theta$.
        \EndFor
        \State Sample bitstrings $r$ from $|\psi(\theta)\rangle$.
        \State Evaluate $\Phi(r)$ and set $r^{(k)}$ to the best sample.
        \State \textbf{Block 3 (auxiliary update):}
        \State Update $s^{(k)}$ using \eqref{eq:block3_closed_form}.
        \State \textbf{Dual update:}
        \State Compute $d^{(k)}$ from \eqref{residual} and update $\lambda^{(k)}$ via \eqref{eq:dual_update_three_block}.
        \State Check convergence as in Theorem~\ref{thm:admm_convergence}.
      \EndWhile
      \State \textbf{Laplacian update:}
      \State Construct $\mathcal{E}_{\ell+1}$ and $L_{\ell+1}$ from the ADMM output
      \State (e.g., using $r^{(k)}$ or thresholded $z^{(k)}$),
      \State ensuring \eqref{eq:MIQP_deg}--\eqref{eq:MIQP_flow_root} hold.
      \State Set $\ell \gets \ell + 1$.
    \EndWhile
  \end{algorithmic}
\end{algorithm}

\section{Numerical Results}
\label{sec:numerical}

We present numerical examples that demonstrate the proposed dynamic quantum topology design method. All experiments are generated by a single Python script that follows Algorithm~\ref{alg:closed_loop_q_admm}. The simulations are carried out on a classical computer, and the quantum routines are executed on a Qiskit simulator rather than on real quantum hardware. We use Qiskit version 0.46.3 and qiskit-algorithms version 0.3.1. The QITE solver is implemented using VarQITE from qiskit-algorithms, combined with the ImaginaryMcLachlanPrinciple and an EfficientSU2 ansatz with linear entanglement. The complete code is available at \href{https://github.com/LSU-RAISE-LAB/DQOCT}{GitHub}. 

\subsection{Simulation Setup and Implementation}
\label{subsec:simulation_setup}

The code starts by fixing the number of agents, the order of the consensus dynamics, and the initial conditions. The underlying candidate graph is the complete undirected graph on \(n\) nodes, and from this edge list we build a simple incidence matrix used to compute degrees and to assemble the Laplacian from a given edge indicator vector. In each topology update, the script calls the ADMM routine that implements the three-block splitting described in Section~\ref{sec:three_block_admm}. 

Block 2 constructs the QUBO in the binary proxy variable. From the resulting quadratic and linear coefficients, we build an Ising Hamiltonian. When the brute-force option is active, the code evaluates the exact QUBO cost over all bitstrings and selects the best one, which is useful for debugging and for small graphs. When QITE is used, VarQITE evolves an EfficientSU2 ansatz under imaginary time with respect to this Hamiltonian. 

The ADMM hyperparameters are fixed at the beginning of each run. The penalty parameter \(\rho\) is kept the same in all experiments, while the penalty on the auxiliary variable, denoted by \(\beta\), is chosen as a simple function of the number of agents. In the implementation \(\beta\) takes moderate values (around \(10^2\)) for small networks and grows gradually for larger networks according to a geometric rule. These values were tuned by trial and error to achieve stable residual histories and reasonable convergence speed, and they can be adjusted by the user for different problem sizes.

After convergence of the ADMM loop at a given snapshot, the relaxed edge vector is thresholded at 0.5 to produce a binary topology. From this binary vector, we construct the Laplacian and compute its eigenvalues to verify that the resulting graph is connected. The Laplacian is then used in the consensus dynamics. The final state at the end of the interval becomes the initial condition for the next topology update. At each step, the code computes a consensus error. The online loop stops when this error falls below a chosen tolerance or when a maximum simulation time is reached. In some experiments, we also allow random perturbations of the positions when the error remains large for a long time, so that the algorithm must reconfigure the topology.

\subsection{Numerical Examples}
\label{subsec:numerical_examples}

In all examples, the candidate communication graph is the complete undirected graph on \(n\) nodes, and the degree bound is fixed at \(\gamma = 2\). The degree penalty coefficients are chosen as \(  \kappa_i = 0.1, \; i = 1,\dots,n,\) so that highly unbalanced degree profiles are mildly penalized. The ADMM hyperparameters are \(\rho = 20, \qquad \mu_{\text{card}} = 0.1,\) while the auxiliary penalty \(\beta\) is chosen as a function of the network size \(n = N_{\text{AGENTS}}\):
\[
  \beta =
  \begin{cases}
    200, & n = 5,\\
    400, & n = 6,\\
    600, & n = 7.
  \end{cases}
\]
The maximum number of ADMM iterations at each topology update is set to \( k_{\max} = 500\) ADMM tolerance \(10^{-3}.\) For the second-order consensus dynamics we fix the control gains \( \alpha = 3.0, \qquad \beta = 3.0,\) as in \eqref{eq:csys2_tv}. The initial positions are sampled uniformly at random in the interval \([-5,5]\), \(  x_i(0) \sim \mathcal{U}(-5,5),\) and the initial velocities in the second-order cases are set to zero, \( v_i(0) =0 .\)

The initial topology is a connected graph that satisfies the degree bound and Assumption~\ref{ass:connectivity}, and it is held constant until \(t = 5~\mathrm{s}\). After this first update time, the topology is recomputed every \(\Delta t = 0.5~\mathrm{s}\) using Algorithm~\ref{alg:closed_loop_q_admm}. The overall closed-loop simulation is capped at \(T_{\max} = 10~\mathrm{s}.\)

To monitor convergence we use a standard consensus error metric. The closed-loop run is terminated as soon as the position disagreement drops below the tolerance
\[
  e(t) \le \texttt{CONS\_TOL} = 10^{-3}
\]
or when \(t\) reaches \(T_{\max}\). The quantum binary block uses VarQITE with imaginary-time horizon \(\tau = 1.5\), discretized in \(30\) steps, an EfficientSU2 ansatz with one repetition and linear entanglement.

\subsubsection*{Example 1: Five Agents with First-Order Dynamics}

In the first example we consider a network of \(n=5\) agents with first-order consensus dynamics \eqref{eq:csys1_tv}. 

Figure~\ref{fig:ex1_states} shows the evolution of the agent positions. Starting from dispersed initial conditions, the trajectories converge to a common value close to the initial average, in agreement with Theorem~\ref{thm:consensus}.

\begin{figure}[!t]
    \centering
    \includegraphics[width=0.9\linewidth]{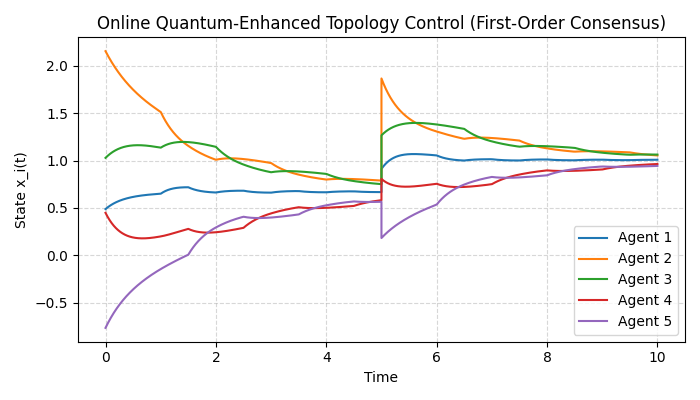}
    \caption{\footnotesize Example~1 (first-order consensus with \(n=5\) agents): evolution of the agent positions}
    \label{fig:ex1_states}
\end{figure}

Figure~\ref{fig:ex1_graph} depicts the optimized graphs obtained during the simulation. All node degrees satisfy \(\deg(i)\le \gamma = 2\), and the resulting topology is sparse and close to a path-like structure, which is consistent with the degree constraints and with the quadratic degree penalty. 

\begin{figure}[!t]
    \centering
    \includegraphics[width=0.9\linewidth]{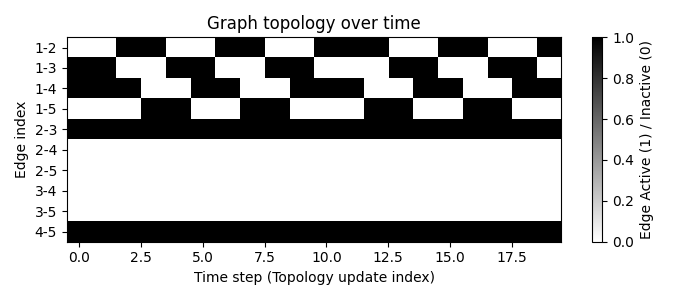}
    \caption{\footnotesize optimized communication graphs}
    \label{fig:ex1_graph}
\end{figure}

The corresponding consensus error \eqref{eq:csys1_tv} is plotted in Fig.~\ref{fig:ex1_error}. 

\begin{figure}[!t]
    \centering
    \includegraphics[width=0.9\linewidth]{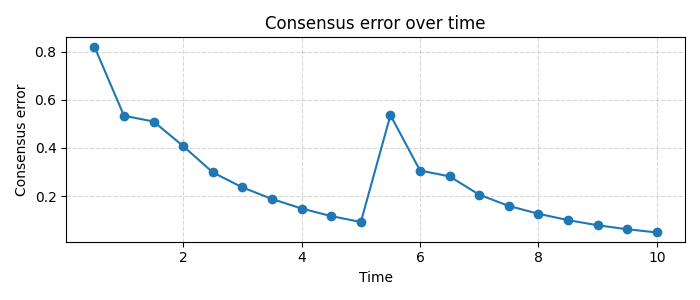}
    \caption{\footnotesize consensus error versus time}
    \label{fig:ex1_error}
\end{figure}

\subsubsection*{Example 2: Six Agents with Second-Order Dynamics}

In the second example, we consider a network of \(n=6\) agents with second-order consensus dynamics \eqref{eq:csys2_tv_x}--\eqref{eq:csys2_tv}. 

Figure~\ref{fig:ex2_states_pos} shows the evolution of the agent positions. The trajectories converge to a common value, consistent with second-order consensus.

\begin{figure}[!t]
    \centering
    \includegraphics[width=0.9\linewidth]{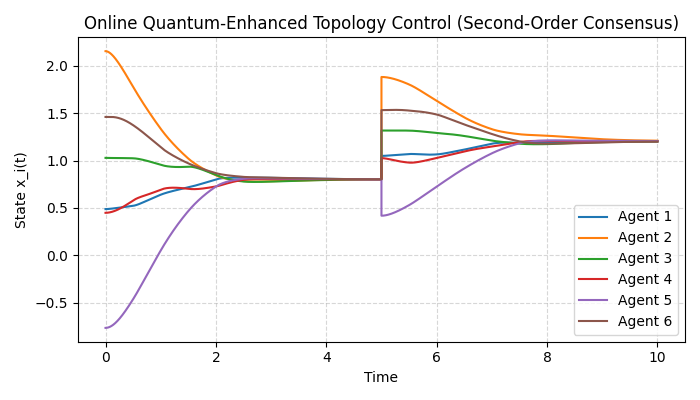}
    \caption{\footnotesize Example~2 (second-order consensus with \(n=6\) agents): evolution of the agent positions}
    \label{fig:ex2_states_pos}
\end{figure}

Figure~\ref{fig:ex2_states_vel} shows the corresponding velocity trajectories, which converge to zero for all agents.

\begin{figure}[!t]
    \centering
    \includegraphics[width=0.9\linewidth]{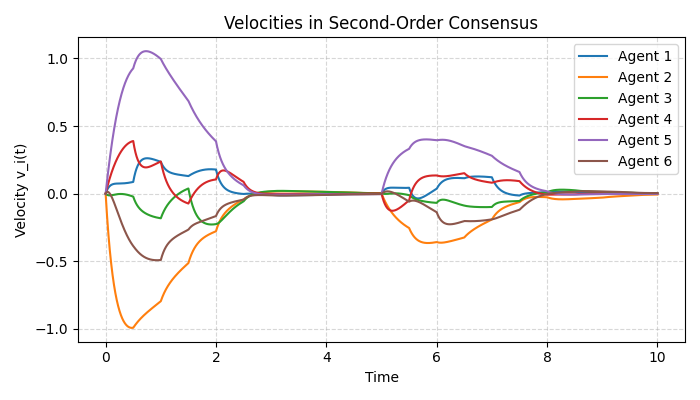}
    \caption{\footnotesize Example~2: evolution of the agent velocities}
    \label{fig:ex2_states_vel}
\end{figure}

Figure~\ref{fig:ex2_graph} depicts the optimized communication graphs obtained during the simulation. All node degrees satisfy \(\deg(i)\le \gamma = 2\), and the resulting topologies remain sparse and connected.

\begin{figure}[!t]
    \centering
    \includegraphics[width=0.9\linewidth]{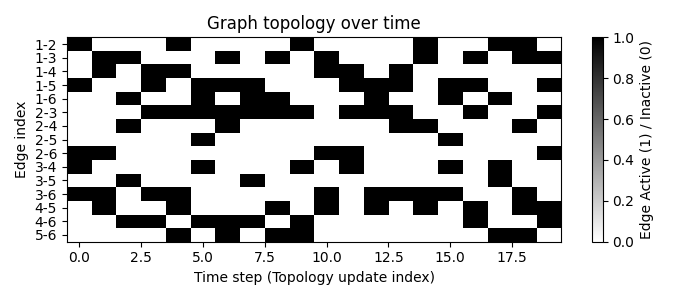}
    \caption{\footnotesize Example~2: optimized communication graphs}
    \label{fig:ex2_graph}
\end{figure}

The corresponding position and velocity consensus errors are plotted in Fig.~\ref{fig:ex2_error}. 

\begin{figure}[!t]
    \centering
    \includegraphics[width=0.9\linewidth]{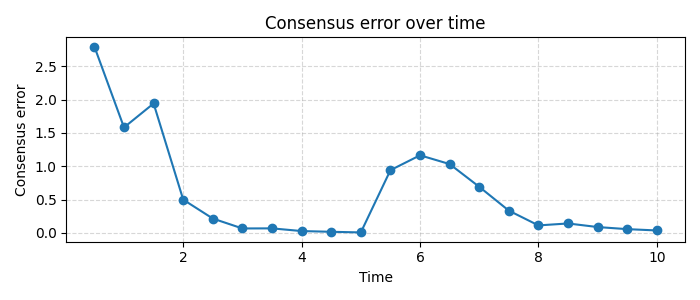}
    \caption{\footnotesize Example~2: position and velocity consensus errors versus time}
    \label{fig:ex2_error}
\end{figure}

\subsubsection*{Example 3: Seven Agents with Second-Order Dynamics}

In the third example, we consider a network of \(n=7\) agents with second-order consensus dynamics \eqref{eq:csys2_tv_x}--\eqref{eq:csys2_tv}. 

Figure~\ref{fig:ex3_states_pos} shows the evolution of the agent positions. Despite the larger network, the trajectories converge to a common value.

\begin{figure}[!t]
    \centering
    \includegraphics[width=0.9\linewidth]{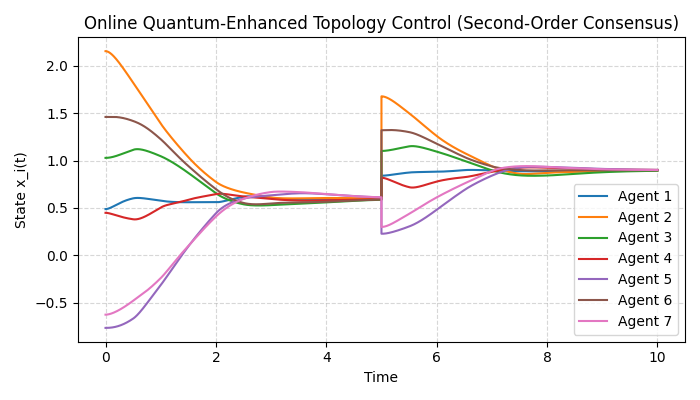}
    \caption{\footnotesize Example~3 (second-order consensus with \(n=7\) agents): evolution of the agent positions}
    \label{fig:ex3_states_pos}
\end{figure}

Figure~\ref{fig:ex3_states_vel} shows the velocity trajectories, which converge to zero for all agents.

\begin{figure}[!t]
    \centering
    \includegraphics[width=0.9\linewidth]{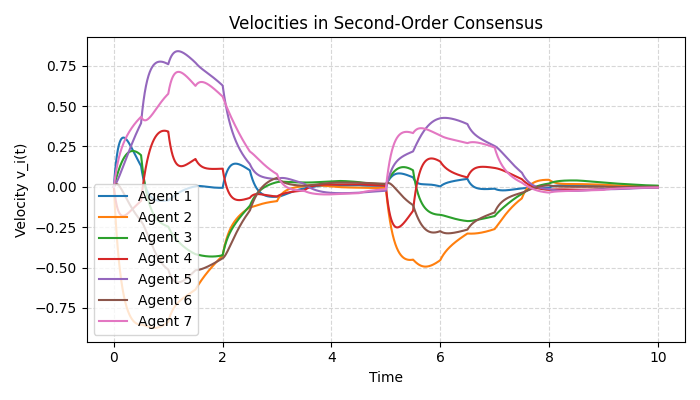}
    \caption{\footnotesize Example~3: evolution of the agent velocities}
    \label{fig:ex3_states_vel}
\end{figure}

Figure~\ref{fig:ex3_graph} depicts the optimized communication graphs obtained during the simulation. The topologies are sparse, degree-bounded, and remain connected throughout the run.

\begin{figure}[!t]
    \centering
    \includegraphics[width=0.9\linewidth]{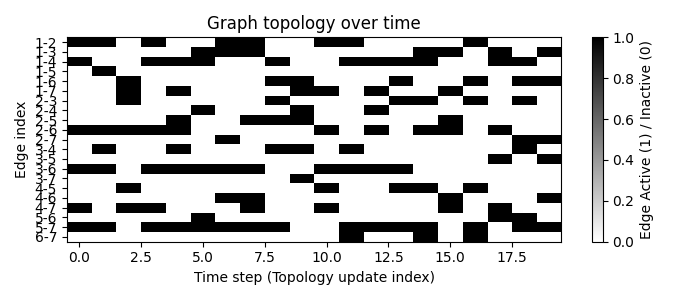}
    \caption{\footnotesize Example~3: optimized communication graphs}
    \label{fig:ex3_graph}
\end{figure}

The corresponding position and velocity consensus errors are plotted in Fig.~\ref{fig:ex3_error}. 

\begin{figure}[!t]
    \centering
    \includegraphics[width=0.9\linewidth]{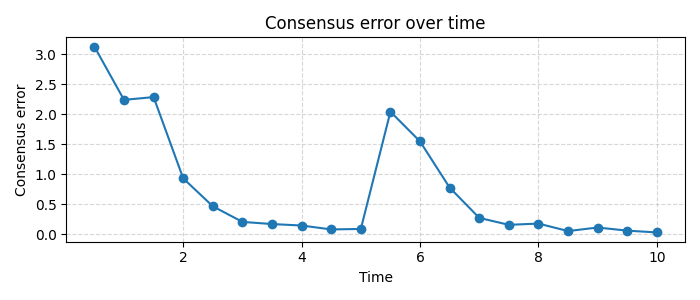}
    \caption{\footnotesize Example~3: position and velocity consensus errors versus time}
    \label{fig:ex3_error}
\end{figure}

\section{Conclusion}
\label{sec:conclusion}

We proposed a dynamic quantum framework for online communication topology design in consensus-based linear multi-agent systems, formulating the problem as an MIQP with explicit communication, distance, and degree costs, and exact flow-based connectivity constraints. A three-block ADMM splitting was introduced that keeps all graph constraints in a convex classical block, isolates a pure binary unconstrained block, and couples them via a simple auxiliary update, so that only the binary core is delegated to a quantum solver. The binary block is mapped to a QUBO Hamiltonian and approximately solved by QITE, which embeds a fully quantum ground-state routine within a structured classical decomposition while preserving first- and second-order consensus guarantees under mild connectivity and dwell-time assumptions. Numerical simulations with five to seven agents show that the method produces sparse, degree-bounded, connected topologies that achieve consensus and communication costs, illustrating a concrete pathway for integrating near-term quantum optimizers into closed-loop distributed control.

\bibliographystyle{plain}
\bibliography{example}  

@inproceedings{hasanzadeh2024dynamic,
  title={Dynamic average consensus as distributed {PDE}-based control for multi-agent systems},
  author={Hasanzadeh, Milad and Tang, Shu-Xia},
  booktitle={2024 European Control Conference (ECC)},
  pages={828--833},
  year={2024},
  organization={IEEE}
}

@article{su2015distributed,
  title={Distributed consensus control of multi-agent systems with higher order agent dynamics and dynamically changing directed interaction topologies},
  author={Su, Shize and Lin, Zongli},
  journal={IEEE Transactions on Automatic Control},
  volume={61},
  number={2},
  pages={515--519},
  year={2015},
  publisher={IEEE}
}

@article{olfati2007consensus,
  title={Consensus and cooperation in networked multi-agent systems},
  author={Olfati-Saber, Reza and Fax, J Alex and Murray, Richard M},
  journal={Proceedings of the IEEE},
  volume={95},
  number={1},
  pages={215--233},
  year={2007},
  publisher={IEEE}
}

@inproceedings{rafiee2010optimal,
  title={Optimal network topology design in multi-agent systems for efficient average consensus},
  author={Rafiee, Mohammad and Bayen, Alexandre M},
  booktitle={49th IEEE Conference on Decision and Control (CDC)},
  pages={3877--3883},
  year={2010},
  organization={IEEE}
}

@article{Preskill2018quantumcomputingin,
  doi = {10.22331/q-2018-08-06-79},
  url = {https://doi.org/10.22331/q-2018-08-06-79},
  title = {Quantum {C}omputing in the {NISQ} era and beyond},
  author = {Preskill, John},
  journal = {{Quantum}},
  issn = {2521-327X},
  publisher = {{Verein zur F{\"{o}}rderung des Open Access Publizierens in den Quantenwissenschaften}},
  volume = {2},
  pages = {79},
  month = aug,
  year = {2018}
}

@article{kargarian2016toward,
  title={Toward distributed/decentralized DC optimal power flow implementation in future electric power systems},
  author={Kargarian, Amin and Mohammadi, Javad and Guo, Junyao and Chakrabarti, Sambuddha and Barati, Masoud and Hug, Gabriela and Kar, Soummya and Baldick, Ross},
  journal={IEEE Transactions on Smart Grid},
  volume={9},
  number={4},
  pages={2574--2594},
  year={2016},
  publisher={IEEE}
}

@article{farhi2014quantum,
  title={A quantum approximate optimization algorithm},
  author={Farhi, Edward and Goldstone, Jeffrey and Gutmann, Sam},
  journal={arXiv preprint arXiv:1411.4028},
  year={2014}
}

@article{peruzzo2014variational,
  title={A variational eigenvalue solver on a photonic quantum processor},
  author={Peruzzo, Alberto and McClean, Jarrod and Shadbolt, Peter and Yung, Man-Hong and Zhou, Xiao-Qi and Love, Peter J and Aspuru-Guzik, Al{\'a}n and O’brien, Jeremy L},
  journal={Nature communications},
  volume={5},
  number={1},
  pages={4213},
  year={2014},
  publisher={Nature Publishing Group UK London}
}

@article{yu2009second,
  title={Second-order consensus for multiagent systems with directed topologies and nonlinear dynamics},
  author={Yu, Wenwu and Chen, Guanrong and Cao, Ming and Kurths, J{\"u}rgen},
  journal={IEEE Transactions on Systems, Man, and Cybernetics, Part B (Cybernetics)},
  volume={40},
  number={3},
  pages={881--891},
  year={2009},
  publisher={IEEE}
}

@article{mesbahi2010graph,
  title={Graph theoretic methods in multiagent networks},
  author={Mesbahi, Mehran and Egerstedt, Magnus},
  year={2010},
  publisher={Princeton University Press}
}

@article{jenefa2024enhancing,
  title={Enhancing distributed agent environments with quantum multi-agent systems and protocols},
  author={Jenefa, A and Vidhya, K and Taurshia, Antony and Naveen, V Edward and Kuriakose, Bessy M and Vijula, V},
  journal={Multiagent and Grid Systems},
  volume={20},
  number={2},
  pages={109--127},
  year={2024},
  publisher={SAGE Publications Sage UK: London, England}
}

@article{zhao2023quantum,
  title={Quantum Multi-Agent Reinforcement Learning as an Emerging AI Technology: A Survey and Future Directions},
  author={Zhao, Jun and Yu, Wenhan},
  journal={Authorea Preprints},
  year={2023},
  publisher={Authorea}
}

@article{acha2025application,
  title={Application of quantum telecommunication in multi-agent system},
  author={Acha, Stefalo and Yi, Sun},
  journal={Discover Robotics},
  volume={1},
  number={1},
  pages={3},
  year={2025},
  publisher={Springer}
}

@book{nielsen2010quantum,
  title={Quantum computation and quantum information},
  author={Nielsen, Michael A and Chuang, Isaac L},
  year={2010},
  publisher={Cambridge university press}
}

@article{shor1999polynomial,
  title={Polynomial-time algorithms for prime factorization and discrete logarithms on a quantum computer},
  author={Shor, Peter W},
  journal={SIAM review},
  volume={41},
  number={2},
  pages={303--332},
  year={1999},
  publisher={SIAM}
}

@inproceedings{grover1996fast,
  title={A fast quantum mechanical algorithm for database search},
  author={Grover, Lov K},
  booktitle={Proceedings of the twenty-eighth annual ACM symposium on Theory of computing},
  pages={212--219},
  year={1996}
}

@article{motta2020determining,
  title={Determining eigenstates and thermal states on a quantum computer using quantum imaginary time evolution},
  author={Motta, Mario and Sun, Chong and Tan, Adrian TK and O’Rourke, Matthew J and Ye, Erika and Minnich, Austin J and Brandao, Fernando GSL and Chan, Garnet Kin-Lic},
  journal={Nature Physics},
  volume={16},
  number={2},
  pages={205--210},
  year={2020},
  publisher={Nature Publishing Group UK London}
}

@article{somisetty2024optimal,
  title={Optimal robust network design: Formulations and algorithms for maximizing algebraic connectivity},
  author={Somisetty, Neelkamal and Nagarajan, Harsha and Darbha, Swaroop},
  journal={IEEE Transactions on Control of Network Systems},
  year={2024},
  publisher={IEEE}
}

@inproceedings{goemans2006minimum,
  title={Minimum bounded degree spanning trees},
  author={Goemans, Michel X},
  booktitle={2006 47th Annual IEEE Symposium on Foundations of Computer Science (FOCS'06)},
  pages={273--282},
  year={2006},
  organization={IEEE}
}

@article{vielma2015mixed,
  title={Mixed integer linear programming formulation techniques},
  author={Vielma, Juan Pablo},
  journal={Siam Review},
  volume={57},
  number={1},
  pages={3--57},
  year={2015},
  publisher={SIAM}
}

@article{burer2009nonconvex,
  title={On nonconvex quadratic programming with box constraints},
  author={Burer, Samuel and Letchford, Adam N},
  journal={SIAM Journal on Optimization},
  volume={20},
  number={2},
  pages={1073--1089},
  year={2009},
  publisher={SIAM}
}

@article{de2012min,
  title={Min-degree constrained minimum spanning tree problem: complexity, properties, and formulations},
  author={de Almeida, Ana Maria and Martins, Pedro and de Souza, Maur{\'\i}cio C},
  journal={International Transactions in Operational Research},
  volume={19},
  number={3},
  pages={323--352},
  year={2012},
  publisher={Wiley Online Library}
}

@article{ravi2001approximation,
  title={Approximation algorithms for degree-constrained minimum-cost network-design problems},
  author={Ravi, Ramamoorthi and Marathe, Madhav V and Ravi, SS and Rosenkrantz, Daniel J and Hunt III, Harry B},
  journal={Algorithmica},
  volume={31},
  number={1},
  pages={58--78},
  year={2001},
  publisher={Springer}
}

@article{montanaro2016quantum,
  title={Quantum algorithms: an overview},
  author={Montanaro, Ashley},
  journal={npj Quantum Information},
  volume={2},
  number={1},
  pages={1--8},
  year={2016},
  publisher={Nature Publishing Group}
}

@article{hasanzadeh2024distributed,
  title={Distributed fixed-time rotating encirclement control of linear multi-agent systems with moving targets},
  author={Hasanzadeh, Milad and Baradarannia, Mahdi and Hashemzadeh, Farzad},
  journal={Journal of the Franklin Institute},
  volume={361},
  number={11},
  pages={106970},
  year={2024},
  publisher={Elsevier}
}

@article{ajagekar2019quantum,
  title={Quantum computing for energy systems optimization: Challenges and opportunities},
  author={Ajagekar, Akshay and You, Fengqi},
  journal={Energy},
  volume={179},
  pages={76--89},
  year={2019},
  publisher={Elsevier}
}

@article{hong2016convergence,
  title={Convergence analysis of alternating direction method of multipliers for a family of nonconvex problems},
  author={Hong, Mingyi and Luo, Zhi-Quan and Razaviyayn, Meisam},
  journal={SIAM Journal on Optimization},
  volume={26},
  number={1},
  pages={337--364},
  year={2016},
  publisher={SIAM}
}

@article{wang2019global,
  title={Global convergence of ADMM in nonconvex nonsmooth optimization},
  author={Wang, Yu and Yin, Wotao and Zeng, Jinshan},
  journal={Journal of Scientific Computing},
  volume={78},
  number={1},
  pages={29--63},
  year={2019},
  publisher={Springer}
}

@article{guo2017convergence,
  title={Convergence of ADMM for multi-block nonconvex separable optimization models},
  author={Guo, Ke and Han, Deren and Wang, David ZW and Wu, Tingting},
  journal={Frontiers of Mathematics in China},
  volume={12},
  number={5},
  pages={1139--1162},
  year={2017},
  publisher={Springer}
}

@article{yang2010decentralized,
  title={Decentralized estimation and control of graph connectivity for mobile sensor networks},
  author={Yang, Peng and Freeman, Randy A and Gordon, Geoffrey J and Lynch, Kevin M and Srinivasa, Siddhartha S and Sukthankar, Rahul},
  journal={Automatica},
  volume={46},
  number={2},
  pages={390--396},
  year={2010},
  publisher={Elsevier}
}

@article{aragues2014distributed,
  title={Distributed algebraic connectivity estimation for undirected graphs with upper and lower bounds},
  author={Aragues, Rosario and Shi, Guodong and Dimarogonas, Dimos V and Sag{\"u}{\'e}s, Carlos and Johansson, Karl Henrik and Mezouar, Youcef},
  journal={Automatica},
  volume={50},
  number={12},
  pages={3253--3259},
  year={2014},
  publisher={Elsevier}
}

@article{tegling2023scale,
  title={Scale fragilities in localized consensus dynamics},
  author={Tegling, Emma and Bamieh, Bassam and Sandberg, Henrik},
  journal={Automatica},
  volume={153},
  pages={111046},
  year={2023},
  publisher={Elsevier}
}

\end{document}